\newtheorem{lemma}{Lemma}
\newenvironment{proof}[1][Proof]{\begin{trivlist}
		\item[\hskip \labelsep {\bfseries #1}]}{\end{trivlist}}
\newenvironment{equationate}{%
	\itemize
	\let\orig@item\item
	\def\item{\orig@item[]\refstepcounter{equation}\def\item{\hfill(\theequation)\orig@item[]\refstepcounter{equation}}}
}{%
	\hfill(\theequation)%
	\enditemize
}
\title{A survey of modularized backstepping control design approaches to nonlinear ODE systems}
\author{ \href{https://orcid.org/0000-0001-5522-8094}{\includegraphics[scale=0.06]{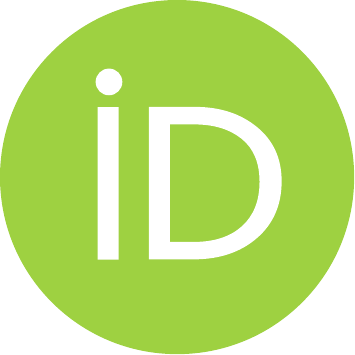}\hspace{1mm}Zhengru Ren}
	\\
	Institute for Ocean Engineering\\
	Shenzhen International Graduate School\\
	Tsinghua University\\
	Shenzhen, 518055, China \\
	\texttt{zhengru.ren@sz.tsinghua.edu.cn} \\
}
\begin{document}
\maketitle

\begin{abstract} 
Backstepping is a mature and powerful Lyapunov-based design approach for a specific set of systems. Throughout the development over three decades, innovative theories and practices have extended backstepping to stabilization and tracking problems for nonlinear systems with growing complexity. The attractions of the backstepping-like approach are the recursive design processes and modularized design. A nonlinear system can be transferred into a group of simple problems and solved it by a sequential superposition of the corresponding approaches for each problem. To handle the complexities, backstepping designs always come up with adaptive control and robust control. The survey aims to review the milestone theoretical achievements among thousands of publications making the state-feedback backstepping designs of complex ODE systems to be systematic and modularized. Several selected elegant methods are reviewed, starting from the general designs, and then the finite-time control enhancing the convergence rate, the fuzzy logic system and neural network estimating the system unknowns, the Nussbaum function handling unknown control coefficients, barrier Lyapunov function solving state constraints, and the hyperbolic tangent function applying in robust designs. The associated assumptions and Lyapunov function candidates, inequalities, and the deduction key points are reviewed. The nonlinearity and complexities lay in state constraints, disturbance, input nonlinearities, time-delay effects, pure feedback systems, event-triggered systems, and stochastic systems. Instead of networked systems, the survey focuses on stand-alone systems. 
\end{abstract}
	
\keywords{                   
Backstepping, adaptive control \and robust control \and nonlinear system \and Lyapunov methods \and strict-feedback system \and pure-feedback system \and input nonlinearity}

\section{Introduction}
First developed in 1991, backstepping approach~\cite{kanellakopoulos:1991systematic,kokotovic:1992joy} is a Lyapunov-based recursive design procedure for triangular strict-feedback form systems. In brief, the main idea is to divide the entire system into several subsystems and to recursively design a global controller according to the former subsystem. Adaptive control is involved when uncertainties are included in the system. The approach has been widely used in a wide scope of nonlinear systems. The controller and adaptive update laws can be designed simultaneously which improves the transient performance.

Over three decades of research and development, a significant amount of research papers extend the application field of backstepping intensively. In the first five years, the fundamental backstepping schemes and redesigns with parametric uncertainties attracted great attention. Later, the research focuses are the methodology improvement and more advanced parametric uncertainties. In the 2000s, the emphasis is laid on systems with uncertainty and nonsmoothness, neural adaptive control, time-delay effects, etc. Innovative design techniques provide solutions to various nonsmooth effects and complex uncertainties. In the recent decade, the scope of backstepping design methodology has been broadened to more complex systems, e.g., system with several aforementioned uncertainties, switched systems, stochastic systems, time-triggered systems, multi-agent systems, etc. 

Survey papers and textbooks on backstepping and adaptive control are reported in literature, e.g., \cite{krstic:1995nonlinear,kokotovic:2001constructive,khalil:1996noninear,ioannou:1996robust}. However, a considerable amount of studies have been conducted thereafter, such as, actuator and sensor nonlinearities \cite{tao:1996adaptive}, nonsmooth system and input nonlinearities \cite{tao:2013adaptive}, partial differential equation (PDE) \cite{krstic:2008boundary}, neural adaptive control \cite{ge:2013stable}, fault-tolerant control \cite{shen:2017fault}. There still lacks a systematically organized summary of state-of-the-art advances. Although thousands of papers are available online, a limited number of significant contributions can be extracted from them. Each method handles a specific form of nonlinear systems. The innovations normally rely on a series of assumptions. Since a lot of research combines several previous approaches, most pages for these works focus on some repeated assumptions, lemmas, definition, Lyapunov function candidates (LFC), and recursive deductions. In this survey, we try to point out the essence of different methods with designer-friendly characteristics. 

Therefore, we believe that it is timely and helpful to present a pervasive survey of state-of-the-art backstepping-like techniques to ordinary differential equation (ODE) systems with various nonlinearities. 
Another target of this survey is to help the interesting readers find suitable approaches to handle complex nonlinear system in a systematical and modularized way. 
A complex nonlinear system can be separated into several simplified problems and solved with the integration of basic methodologies and the corresponding Lyapunov function components.
The selection of specific approaches is determined by the constructive Lyapunov function. 
The design procedures are summarized as follows:  (i) divided it into several typical mini problems and transfer each of the mini problems into a typical form; (ii) select the reasonable approach to every mini problem, as well as the corresponding assumptions and Lyapunov function candidates; (iii) design the backstepping-like control recursively by integrating all approaches. 
Admittedly, the references provided in this survey do not give an exhausted list. Some conceptually deeper methods in the 1990s are not included since they are difficult to be categorized systematically, and the complexities in deduction and proof prevent their modularized applications. But it is believed to contain many important approaches that are welcomed by the successors in the academic domain.

The majority of the present survey was written in the final phase of my PhD study in 2019 and was slightly revised in 2020. It has been used as a lecture note in a PhD course at the Norwegian University of Science and Technology, Norway. Very positive feedback was received from the PhD candidates, who claimed that understanding advanced backstepping algorithms can never be so simple. 

The rest of this paper is organized as follows. 
In Section II, several widely-used elegant methods are reviewed, including the integrator backstepping, the dynamic surface control, command filtered backstepping, finite-time control, fuzzy logic system, neural network, Nussbaum function, barrier Lyapunov function, and hyperbolic tangent function.
In section III, key theoretical and methodological innovations to a class of systems are presented and discussed, i.e., state constraints, disturbance, input nonlinearities, and time-delay effects, pure-feedback, and stochastic systems. 
Most complexities are transferred to the forms which can be solved by the elegant methods. 
In Section IV, the advancements are summarized and discussed. 

To keep the review short and organized, some simplifications will be adopted hereafter without any specifications. Firstly, only single-input-single-output (SISO) systems are reviewed since the design methods for multiple-input-multiple-output (MIMO) systems are always similar. The results from MIMO systems can be adjusted accordingly. Furthermore, the state estimator design \cite{Marino1993output,FREEMAN1996735}, which is needed for output-feedback control, is neglected since it is not the emphasis of this survey. The estimated error should be involved additionally in the LFC to prove the stability of the entire system.

\subsection{Notations}
The Euclidean norm of a vector $x$ and Frobenius norm of a matrix $A$ are denoted by $|x|$ and $|A|_F$, respectively. Blackboard bold variables denote sets and domains, e.g., $\mathbb{R}^n$, $\mathbb{R}_+$, $\mathbb{R}^{n\times m}$ are the n-dimensional real space, nonnegative real number set, and $n\times m$ real space, respectively. $a^{(b)}$ stands for the $b^{th}$ derivative of $a$. $C^i$ denotes the set of all functions with continuous $i^{th}$ differential and and $C^{i,1}$ stands for the family of all nonnegative functions $V(x,t,r)$ on $\mathbb{R}^{n}\times\mathbb{R}_+\times S$ which are $C^i$ in $x$ and $C^1$ in $t$. The identify matrix and zero matrix with a size $m\times n$ are denoted by $I_{m\times n}$ and $0_{m\times n}$, respectively. Specially, $I_n$ and $0_n$ are the identity matrix and zero matrix short for $I_{n\times n}$ and $0_{n\times n}$, respectively. A set of indices is defined as $\mathcal{I}=\{1,2,\cdots,n-1\}$. Lie derivative of function $V$ in respect of $f$ is defined as $L_f V = \frac{\partial V}{\partial x}f(x)$. 

Variables with hat and tilde operators stand for the estimated values and estimate errors, respectively. Their relation is, e.g., given by $\tilde{\theta} = \theta-\hat{\theta}$. When $\theta$ is a constant, $\dot{\tilde{\theta}}=-\dot{\hat{\theta}}$. Overlines and underlines (e.g., $\overline{c}$ and $\underline{c}$), are the maxima and minima of a variable $c$. The sign function is given by $$
\text{sgn}(a) = \begin{cases}
	1, &\text{if } a>0 \\
	0, &\text{if } a=0 \\
	-1, &\text{if } a<0.
\end{cases}$$


\subsection{Preliminary} \label{section:paper13.preliminary}
Since the unknown nonlinearities are difficult to be fully estimated and canceled, the most widely-used stability lemma is updated as follows.
\begin{lemma}\label{lemma:paper13.boundedV} 
	A LFC $V(x)$ is	bounded if the initial condition $V(0)$ is bounded, $V(x)$ is positive definite and continuous and if a Lyapunov-like inequality holds, i.e.,
	\begin{equation}\label{eq:paper13.boundedVLemma.V}
		\dot{V}(x) \leq -\gamma V(x) + \delta,
	\end{equation}
	where $\gamma > 0$ and $\delta > 0$. Define $\rho:=\delta/\gamma$,
	\begin{equation}\label{eq:paper13.boundedVLemma.Vt_bound}
		0\leq V(t) \leq \rho +  (V(0)-\rho)\exp(-\gamma t).
	\end{equation}
	And it implies that
	\begin{equation}\label{eq:paper13.boundedVLemma.Vt}
		V(t)\leq e^{-\gamma t}V(0)+\int_{0}^{t}e^{-\gamma (t-\tau)}\rho(\tau)d\tau, \ \forall t\geq 0,
	\end{equation}
	for any finite constant $\gamma$.
\end{lemma}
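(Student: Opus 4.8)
The plan is to treat the differential inequality \eqref{eq:paper13.boundedVLemma.V} by the integrating-factor method, which is the linear special case of the comparison lemma, and then to read off all three claims in turn. The whole argument is elementary calculus once the inequality is converted into an exact derivative.

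First I would multiply both sides of $\dot V \le -\gamma V + \delta$ by the positive integrating factor $e^{\gamma t}$ and recognize the left-hand side as an exact derivative, obtaining $\frac{d}{dt}\bigl(e^{\gamma t} V(t)\bigr) \le \delta\, e^{\gamma t}$. Integrating from $0$ to $t$ gives $e^{\gamma t}V(t) - V(0) \le \delta \int_0^t e^{\gamma\tau}\,d\tau = \rho\,(e^{\gamma t}-1)$, where $\rho = \delta/\gamma$. Dividing through by $e^{\gamma t}$ and rearranging yields the upper bound in \eqref{eq:paper13.boundedVLemma.Vt_bound}, namely $V(t)\le \rho + (V(0)-\rho)e^{-\gamma t}$. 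The lower bound $V(t)\ge 0$ is immediate from the assumed positive definiteness of $V$.

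For \eqref{eq:paper13.boundedVLemma.Vt} I would repeat the same integrating-factor step but keep the forcing term general, writing it as $\rho(\tau)$ instead of the constant $\delta$. Integrating $\frac{d}{dt}\bigl(e^{\gamma t}V\bigr) \le e^{\gamma t}\rho(t)$ from $0$ to $t$ and multiplying by $e^{-\gamma t}$ produces exactly $V(t)\le e^{-\gamma t}V(0)+\int_0^t e^{-\gamma(t-\tau)}\rho(\tau)\,d\tau$. Specializing to $\rho(\tau)\equiv\delta$ and evaluating the elementary integral recovers \eqref{eq:paper13.boundedVLemma.Vt_bound}, which confirms the internal consistency of the two displays.

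The step that deserves the most care is the passage from the differential \emph{inequality} to the integral bound. In the general nonlinear setting this is the comparison lemma, but here the right-hand side is affine in $V$, so the argument simplifies: provided $V$ is differentiable along the trajectory, the pointwise inequality $\frac{d}{dt}\bigl(e^{\gamma t}V\bigr)\le \delta\, e^{\gamma t}$ holds for every $t$, and integrating a scalar inequality preserves its direction, so no monotone-dependence argument for a nonlinear comparison ODE is required. The hypotheses one must genuinely track are the regularity of $V$ (making $\dot V$ and the product rule legitimate) and the sign conditions $\gamma>0$, $\delta>0$ that render $e^{\gamma t}$ a valid integrating factor and keep the lower bound meaningful. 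Finally, I would flag the mild notational overloading in the statement, where $\rho$ denotes the constant $\delta/\gamma$ in \eqref{eq:paper13.boundedVLemma.Vt_bound} but the (possibly time-varying) forcing function in \eqref{eq:paper13.boundedVLemma.Vt}; reconciling the two is exactly the constant-forcing specialization described above.
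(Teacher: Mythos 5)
Your proposal is correct and follows essentially the same route as the paper: multiply \eqref{eq:paper13.boundedVLemma.V} by the integrating factor $e^{\gamma t}$, recognize the exact derivative $\frac{d}{dt}\bigl(e^{\gamma t}V\bigr)$, integrate from $0$ to $t$, and rearrange to obtain \eqref{eq:paper13.boundedVLemma.Vt_bound}. Your treatment is in fact slightly more careful than the paper's (whose displayed intermediate steps contain sign typos in the exponent and which does not explicitly derive \eqref{eq:paper13.boundedVLemma.Vt} or note the overloading of $\rho$), but the underlying argument is identical.
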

\begin{proof}
	Times $\exp(-\gamma t)$ to both sides of (\ref{eq:paper13.boundedVLemma.V}), yields $\dot{V}\exp(-\gamma t) + \gamma V \exp(\gamma t)\leq \delta\exp(\gamma t)$. Integrating both side yields $\frac{d}{d\,t}V\exp(\gamma t) \leq \frac{\delta}{\gamma}\frac{d}{d\,t} \exp(\gamma t)$. Therefore, we have $V(t)\exp(\gamma t) - V(0) \leq \rho (\exp(\gamma t) - 1)$. 
\end{proof}

Lemma~\ref{lemma:paper13.boundedV} is significant to prove the global semi-stability which is the foundation of a great number of publications. Many works in this survey utilize Lemma~\ref{lemma:paper13.boundedV} to prove the final stability. The tracking errors converge to a small neighborhood of the origin, instead of exactly stays at the equilibrium point. The value of $\gamma$ can be found using a superficial inequality $c_{\min} \sum_{i=1}^{n}x_i^2 \leq c_1 x_1^2 + \cdots + c_n x_n^2 \leq c_{\max} \sum_{i=1}^{n}x_i^2$, where $c_{\min}=\min\{c_1,c_2,c_3,\cdots c_n\}$ and $c_{\max}=\max\{c_1,c_2,c_3,\cdots c_n\}$. From \eqref{eq:paper13.boundedVLemma.Vt}, the selection of $\gamma$ not only influences the final tracking error limitation, but also determines the convergence rate. 
There are several transformations are developed for more complex systems in similar forms.

Besides, there are many other method to prove the boundness; however, Lemma~\ref{lemma:paper13.boundedV} is the most widely-used with a low deduction complexity and relatively loose restrictions, and it inspires many other systematic methods.

\section{Elegant methodologies}
Backstepping approaches can be categorized into cancellation-based backstepping and passivity-based backstepping \cite{skjetne:2004robust}. According to the number of publications, the former method stands for the ruling position.
In this section, several elegant design methodologies from which thousands of publications benefit are reviewed; see Fig.~\ref{fig:paper13.elegant_approaches}. These methods perform as basics to a significant amount of publications. 
Each of them could be applied to various complexities and nonlinearities. 
We consider a method is elegant if it can be applied to various problems and cooperate with other design approaches.
The development of each method starts from an integrator chain and is extended to increasingly complex systems, e.g., from linear systems to nonlinear systems, from fully known systems to unknown and disturbed systems, and from second- or third-order systems to recursive design for large-scale systems. Since most nonlinearities cannot be fully canceled, most of the methods are prevalent to Lemma~\ref{lemma:paper13.boundedV}, and the design procedures and stability proof follow a similar methodology.

There are two dominant directions to do the cancellation. The first one is to cancel the nonlinearity as much as possible, resulting in a robustness approach. The second way is to approximate everything using online neural networks (NNs) or fuzzy logic systems (FLSs) and cancel the estimates.

\begin{figure*}
	\includegraphics[width=0.7\linewidth]{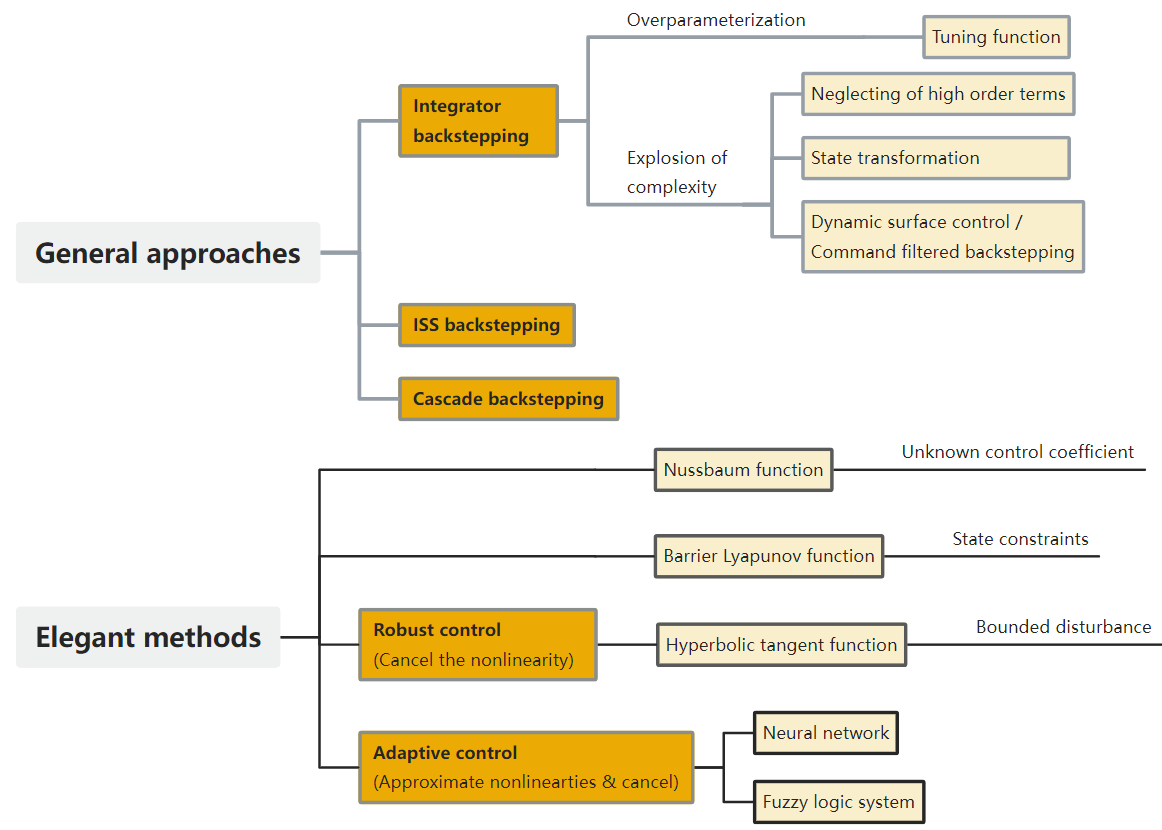}
	\caption{Elegant modularized methods used in backstepping designs.}
	\label{fig:paper13.elegant_approaches}
\end{figure*}

\subsection{General design approaches}
Being the starting point of backstepping designs, the strict-feedback nonlinear system with a triangular structure is given by
\begin{subequations}\label{eq:paper13.strictFeedbackForm}
	\begin{align}
		\dot{x}_i &= f_i(\bar{x}_i) + g_i(\bar{x}_i)x_{i+1} + \phi_i(\bar{x}_i)^\top\theta, i\in\mathcal{I} \\
		\dot{x}_n &= f_n(\bar{x}_n) + g_n(\bar{x}_n) u +  \phi_n(\bar{x}_n)^\top\theta, \\
		y &= x_1,
	\end{align}
\end{subequations}
where 
$x_1, \cdots, x_n$ are the states,
$f_1,\cdots, f_n$, $g_1, \cdots, g_n$ are smooth functions, $g_i$ is the control coefficient function,
$\bar{x}_i=[x_1,x_2,\cdots,x_i]^\top$,
$\theta\in\mathbb{R}^{p}$ is an unknown constant vector, 
$\phi_i(\cdot):\mathbb{R}^i\to\mathbb{R}^{p}$ is assumed to be known, and
$u$ and $y$ are the input and output respectively. The terms $\theta^\top\phi_i(\bar{x}_i)$ are the parametric uncertainties, namely, the line-in-parameters condition \cite{Marino1993global}.
Without specific declarations, the output signal is $y = x_1$ hereafter. When $f_1 = \cdots = f_n=0$ and $g_1=\cdots=g_n=1$, the system is simplified to be an integrator chain, or, namely, the Brunovsky form. 
The error states is defined as $z_1 = x_1 - x_{1d}$ and $z_{i+1} = x_{i+1} - \alpha_i$ where $x_{id}$ is the desired trajectory and $\alpha_i$ is the virtual control law to be defined later. The control objective is to track $y$ to $x_{1d}$ as $t\to\infty$, i.e., $z_1 = x_1 - x_{1d}\to 0$ as $t\to\infty$. 
New variables are defined as
$\bar{x}_{1d}=[x_{1d}^{(1)},x_{1d}^{(2)},\cdots,x_{1d}^{(i)}]^\top$,
$\bar{z}_i=[z_1,z_2,\cdots,z_i]^\top$, and
$\bar{z}_{i:j}=[z_i,z_{i+1},\cdots,z_j]^\top$.

A common assumption for the backstepping tracking control is the sufficiently smooth desired trajectory, i.e., the reference signal $x_d(t)$ and its derivatives up to the required number of order are known, bounded, and continuous. In addition, the signs of $g_1,\cdots,g_n$ are assumed to be known and constant, i.e., $|g_i|>0$ for all $t$. Three fundamental methods will be reviewed in this section, including (adaptive) integrator backstepping, adaptive backstepping with tuning functions, and dynamic surface control/command filtered backstepping. Asymptotically tracking adaptive backstepping design for a system \eqref{eq:paper13.strictFeedbackForm} could be achieved. 

Research on each method starts from the integrator chain, where the virtual control coefficients are constantly $g_i = 1$. Later, the algorithms are enriched with control coefficient $g_i$. A common assumption is that the control coefficients are known with unchanged known signs, i.e., there exist unknown positive constants $\underline{g}_i$ and $\overline{g}_i$ s.t., $0<\underline{g}_i \leq |g_i|\leq \overline{g}_i$. Hence, the partial derivatives of $g_i$ are strictly either positive or negative. Singularity problem is also neglected.

\subsubsection{Integrator backstepping and ISS backstepping}
A commonly used quadratic Lyapunov function and its time derivative is given by $V_{i,QF} = \frac{1}{2}z_i^2$, and $\dot{V}_{i,QF} = z_i \dot{z}_i$. The Lyapunov function in step $i$ is the superposition of that in the former step and $V_{i,QF}$, i.e., $V_i = V_{i-1}+V_{i,QF}$.
The global asymptotic stability of the origin after substituting the control laws is verified in \cite{kanellakopoulos:1991systematic}. When an internal system is included in the system, $L_g V$ backstepping shares similar design procedures based on passivity theory \cite{arcak:2000robustification}. 

When the unknown parts can not be completely canceled, ISS stability is employed to prove the boundness of the tracking error\cite{freeman1998robustness}. The gains can be assigned to ensure the tracking accuracy \cite{jiang1994Small}. 

The two main drawbacks of integrator backstepping are: 

(i) Overparameterization problem: Since separate adaptive laws for $\dot{\hat{\theta}}_i$ are designed with every virtual control $\alpha_i$, an overparameterization problem is aroused. This problem is released by using tuning functions to estimate all the unknown parameters in the final step, which significantly reduces the controller's order \cite{krstic:1992adaptive}. However, the robustness of this method to nonparametric nonlinearities is weak.

(ii) Explosion of complexity: It results from the repeated differentiation of virtual control laws. The derivatives of the virtual control signals $\alpha_i$ are required. As the order of the nonlinear system increases, the calculation of the derivation becomes prohibitive. The analytic derivation is cumbersome and tedious when the system order $n$ is larger than 3. To avoid the explosion of complexity, there are a few methods.
First of all, neglecting of high order terms is a possible way to reduce the computational costs in practical implements; however, the Lyapunov stability can not be guaranteed theoretically and it even results in instability. Second, state transformation can be adopted by defining a series of states as $\tilde{x}_1 = x_1 - x_{1d}$ and $\tilde{x}_i = \dot{\tilde{x}}_{i-1}$ with the control objective $\tilde{x}_1\to 0$ as $t\to \infty$. Then, the system $\text{col}(\tilde{x}_1,\cdots,\tilde{x}_n)$ is in a integrator-chain form. The error states are defined $z_1 = \tilde{x}_1$ and $z_i = \tilde{x}_i - \alpha_{i-1}$ \cite{gayaka:2012global}. However, the complex time differentiation of $f_i$ is still inevitable. Thirdly, efforts, such as dynamic surface control (DSC) \cite{swaroop:1997dynamic} and command filtered backstepping \cite{farrell:2009command,yu:2015observer} have been made to avoid such problem systematically. 

\subsubsection{Cascade backstepping}
Instead of integrator backstepping, cascade backstepping is an alternative \cite{sorensen2020comparing}. For the dynamics $\dot{z}_i$, $z_{i+1}$ is considered as an input. The virtual control is designed based on Lyapunov function candidate $V_i = V_{i,QF}$.  If $z_i=0$ and system $(\dot{z}_1,\cdots,\dot{z}_{i-1})$ is stable, and if $z_{i+1}=0$ and the stability of system $\dot{z}_i$ is proved to be stable, then the stability of system $(\dot{z}_1,\cdots,\dot{z}_i)$ is proved as a cascade system.

\subsubsection{Dynamic surface control and command filtered backstepping}
At the early stage, dynamic surface control is an extension of multiple surface sliding control overcoming the explosion of terms satisfying semi-global boundedness of the tracking error. The studies focus on a less general system. Later, command filtered backstepping is proposed in a more general triangular form. The control law is similar to that uses tuning functions. The difference between these two approaches remains in the stability proofs. 
The error states also have similar definitions, i.e., $\tilde{x}_1 = x_1 - x_d$ and $\tilde{x}_i = x_i - x_i^c$, where $x_i^c$ is a filtered signal of $\alpha_i$. First-order lowpass filters are utilized to synthetic input at each step of the design process. The total number of the lowpass filter is $n-1$. The derivatives of the virtual control signal are replaced by the filtered signals; hence, analytic differentiation and the constant-$g_i$ assumption are released. Moreover, non-Lipschitz systems can be stabilized. The DSC methods can global exponentially stabilize the system and arbitrarily bounded tracking for Lipschitz systems, and semi-globally arbitrarily bounded regulation and tracking for non-Lipschitz system \cite{swaroop:2000dynamic}.

The commanded filters have various structures, for instance,
\begin{itemize}
	\item First-order lowpass \cite{farrell:2009command}
	\begin{equation}
		\dot{x}_i^c = - \omega_i(x_i^c-\alpha_{i-1}),
	\end{equation}
	\item Second-order lowpass filter \cite{yu:2015observer}
	\begin{subequations}
		\begin{align}
			\dot{\varphi}_{i,1} & = -\omega_n \varphi_{i,2},\\
			\dot{\varphi}_{i,2} & = -2\zeta\omega_n\varphi_{i,2}-\omega_n(\varphi_{i,1}-\alpha_i),
		\end{align}
	\end{subequations}
	\item First-order Levant differentiator with finite-time convergent property \cite{yu:2018finite}
	\begin{subequations}
		\begin{align}
			\dot{\varphi}_{i,1} & = -a_i |\varphi_{i,1}-\alpha_i|^{1/2} \text{sign}(\varphi_{i,1}-\alpha_i) + \varphi_{i,2},\\
			\dot{\varphi}_{i,2} & = -b_i \text{sgn}(\varphi_{i,2}- \dot{\varphi}_{i,1}),\ i\in \mathcal{I},
		\end{align}
	\end{subequations}
\end{itemize}
where $a_i$ and $b_i$ are tuned coefficients,
$\zeta$ is the damping ratio, 
$\omega_n$ is the natural frequency, and 
$x_{i+1}^c = \varphi_{i,1}$.

The virtual control laws and adaptive laws $\alpha_i$ are simplified with $\dot{x}_i^c$. The stability is proved with the LFC 
$V = \sum_{i=1}^{n}\frac{1}{2} v_i^2 +\frac{1}{2}\tilde{\theta}^\top\Gamma^{-1}\tilde{\theta}$ with $v_i = z_i -  \xi_i$ and $\xi_i$ is commonly defined as 
\begin{subequations}
	\begin{align}
		\dot{\xi}_1 & = -k_1\xi_1 + g_1 \xi_2 + (x_2^c-\alpha_2)g_1,\\
		\dot{\xi}_i & = -k_i\xi_i + g_i \xi_{i+1} - g_{i-1}\xi_{i-1} + (x_{i+1}^c-\alpha_i)g_i,\\
		\dot{\xi}_n & = -k_n\xi_n + g_{n-1} \xi_{n-1}.
	\end{align}
\end{subequations}
DSC and command filter backstepping can be used combining with other techniques. Instead of feedback cancellation, adding one power integrator technique is another possible solution \cite{CORON1991adding,lin:2000adding}.

\subsubsection{Parameter separation}
Young's inequality is the most widely used parameter separation technique in backstepping design. 
If there exists a term $z_i a_i$ in $\dot{V}_i$ where $a_i$ is a bounded variable or a constant, then we have $z_i a_i\leq \frac{1}{2}\varepsilon_i^2 z_i^2 + \frac{1}{2}\frac{a_i^2}{\varepsilon_i^2}$ with $\varepsilon_i>0$ according to Young's inequality. The term $\frac{1}{2}\varepsilon_i^2 z_i^2$ can be easily canceled by the virtual control law $\alpha_i$, and $\frac{1}{2}\frac{a_i^2}{\varepsilon_i^2}$ is left to the final Lyapunov function $V_{n}$ as $\delta = \sum_{i=1}^n \delta_i= \sum_{i=1}^n \frac{1}{2}\frac{a_i^2}{\varepsilon_i^2} \geq 0$. Hence, the stability can be proved when the parameters in $\delta$ are bounded. 
According to Lemma~\ref{lemma:paper13.boundedV}, when choosing sufficient large gains $c_i>0$, $\gamma$ is increased to reduce the boundary of $V_{n}$ and $\rho$ is reduced to ensure a smaller region for the errors. The selection of $\varepsilon_i$ does not influence the overall control gain $\bar{c}_i = c_i + \frac{\varepsilon_i^2}{2}$ and the final tracking performance. Smaller $\varepsilon_i$ results in lower $\frac{\varepsilon_i^2}{2}$ and larger $\delta_i$. To ensure the same boundary, higher $c_i$ is required to achieve higher $\gamma$ in \eqref{eq:paper13.boundedVLemma.V}, and vice versa. 

\subsection{Finite-time control}
It is known that the asymptotic stability implies the system trajectory converges to the equilibrium as $t\to\infty$ resulting in a slow convergence rate near the equilibrium. Sometimes, fast response and high-prevision tracking performance are preferred rather than asymptotic stability. Comparing with the typical backstepping approaches which grantees asymptotic convergence, the finite-time control technique has a faster response and better disturbance-rejection ability, which ensure the convergence to the equilibrium in a finite time, i.e., $\lim\limits_{t\to T} x(t,x_0) = 0$ and $x(t,x_0)=0$ for any $t\geq T$ where $T$ is the settle time. In addition, the finite-time stability has a higher disturbance-rejection capacity and allows a system converges to the origin even more than one equilibrium exist. At the early stage, local finite-time design approaches homogeneous approximation are limited to two- or three-dimensional systems \cite{hong:2001output}, e.g., terminal sliding mode (TSM) control \cite{yu:2005continuous}. Then, \cite{huang:2005global} extends the results to the global domain.

The origin is said to be a finite-time-stable equilibrium of an autonomous system $\dot{x} = f(x(t))$, $f(0)=0$, $x\in\mathbb{R}^{n}$ if the finite-time convergence and Lyapunov stability hold. A Lyapunov-based finite-time stability theorem is proposed in \cite{bhat:2000finite}. The origin is a finite-time-stable equilibrium if there exists a continuous positive definite function $V(x)$, real numbers $\gamma > 0$, and $r\in(0, 1)$, such that,
\begin{equation}\label{eq:paper13.finiteTime.V}
	\dot{V}(x) \leq - \gamma V^r(x), \forall x\in \mathbb{N}\backslash\{0\},
\end{equation}
where $\gamma>0$ and $\mathbb{N}$ is an open neighborhood of the origin \cite{haddad:2008finite}. 
The settling-time function is a function of the initial value of the LF $T \leq \frac{1}{\gamma(1-r)}V(x_0)^{1-r}, x\in\mathbb{N}$. 
The selection of $r$ determines the convergence rate of the Lyapunov function. Compared to the scenario $r=1$ (Lemma~\ref{lemma:paper13.boundedV}), the convergence rate for $r\in(0, 1)$ is faster in the close neighborhood of the origin, while it is slower in the region away from the origin. 
In addition, the extended forms of \eqref{eq:paper13.finiteTime.V} are 
\begin{equationate}
	\item $\bullet$ $\dot{V}(x) \leq -\gamma_1 V(x) - \gamma_2 V^r(x)\label{eq:paper13.finiteTime.V2}$ \cite{yu:2005continuous}
	\item $\bullet$ $\dot{V}(x) \leq -\gamma_1 V^{r'}(x) - \gamma_2 V^r(x)\label{eq:paper13.finiteTime.V3}$ \cite{sun2017new}
\end{equationate}
where $\gamma_1>0$, $\gamma_2>0$ and $r'>1$. They ensure higher convergence rates when the tracking error are far away from the origin, and the settle times are $T = \frac{1}{\gamma_1(1-r)}\ln \frac{\gamma_1 V^{1-r}(x_0) + \gamma_2}{\gamma_2}$ for \eqref{eq:paper13.finiteTime.V2} and $T=\frac{1}{\gamma_2(1-r)}+\frac{V^{1-r'}(x_0)-1}{\gamma_1(1-r')}$ for \eqref{eq:paper13.finiteTime.V3}. For a stochastic system, the finite-time stability in probability guaranteed if $\mathcal{L} V(x) \leq - \gamma V^r(x)$ holds \cite{khoo:2013finite,wang:2015finite}.

The finite-time stability could be extended by involving a finite constant $\delta>0$ in $\dot{V}$. Similar to (\ref{eq:paper13.boundedVLemma.V}),
Practical finite-time stability is guaranteed if $\dot{V}(x) \leq -\gamma_1 V(x) - \gamma_2 V^r(x) +  \delta$ holds \cite{yu:2018finite}. 
However, the definition ``practical'' in finite-time stability could be problematic when considering all systems with asymptotical stability could reach a certain value in a finite time \cite{ren2020finite}. Asymptotically stability only fails to converge to zero in a finite time, and the convergence speed could be tuned by assigning the control gains.

Similar to the typical backstepping methods, the finite-time backstepping approach requires the terms to be canceled by the fractional-order terms. Hence, one of the following assumptions is always required.
\begin{itemize}
	\item $|f_i| \leq (\sum\limits_{j=1}^{i}|x_j|)\rho_i(\bar{x}_i)$ \cite{huang:2005global}
\item $|f_i| \leq \frac{1}{2}|x_{i+1}|^{r_i} + \sum\limits_{j=1}^{i}|x_j|\rho_i(\bar{x}_i)) \sigma$ \cite{hong:2006adaptive}
\item $|f_i|\leq \sigma \rho_i(\bar{x}_i))$ \cite{hong:2006finite}
\item $|f_i| = \varphi (t) \sum\limits_{j=1}^{i}|x_j|^{m_{ij}} + \sigma\sum\limits_{j=1}^{i}|x_j|^{n_{ij}} $ (time-varying system) \cite{zhang:2012finite}
\end{itemize}
where $\rho_i(\bar{x}_i)$ are smooth known $C^1$  positive functions and $\sigma\geq 1$ is an uncertain constant. The forth assumption is used to handle a time-varying system.
The orders of all systems should be decided before the design process. The designed parameters are always $r_1 > \cdots> r_n$ since the higher-dimension dynamics should react faster than the lower dimension.

The selection of LFCs for finite-time stability could be 
\begin{itemize}
\item $\text{[FT1]}$ $V_{i,FT} = \int_{\alpha_i}^{x_i}\left( s^{1/q_k} - \alpha_i^{1/q_i} \right) d s$
\item $\text{[FT2]}$ $V_{i,FT} = \int_{\alpha_i}^{x_i}\left( s^{\beta_{i-1}} - \alpha_{i-1}^{\beta_{i-1}} \right) d s$ \cite{hong:2006finite}
\item $\text{[FT3]}$ $V_{i,FT} = \int_{\alpha_i}^{x_i}\left( s^{1/q_k} - \alpha_i^{1/q_i} \right)^{2-q_k}d s$ 
\cite{huang:2005global}
\item $\text{[FT4]}$ $V_{i,FT} = \int_{\alpha_i}^{x_i}\left( s^{1/q_k} - \alpha_i^{1/q_i} \right)^{2-q_k-\tau}d s$ (order-1) where $\tau$ is a ratio of two numbers.
\end{itemize}

Besides the recursive design method, the finite-time stabilization is always designed with an inductive approach, i.e., the virtual control laws $\alpha_i$ are designed accordingly with assumed known form $\dot{V}_{i-1}$ and let $\dot{V}_{i}$ and $\dot{V}_{i-1}$ have a similar inductive form \cite{hong:2006adaptive}. This is because of the complex forms of the control law and yielded time derivative of the Lyapunov function.
Most designs are according to the Young's-inequality-like inequalities in \cite{qian:2001continuous}.
Another application of the finite-time stability and recursive design is the finite time disturbance observer \cite{du:2013recursive}.

\subsection{Estimation-based backstepping - neural network and fuzzy logic system}\label{sec:NN_FLS}
It is always difficult to design functions $\phi_i$ in \eqref{eq:paper13.strictFeedbackForm}. The estimation-based approach is a solution to this problem. The most widely used approximation theories using together with backstepping are the NNs and FLSs. Their orientations and theories are different. However, their deduction in backstepping, mathematically, are very similar. Hence, both methods are presented together in this section.

NNs are widely used in various backstepping designs to approximate the uncertainties, namely, neural adaptive control. According to the universal approximation property, any smooth function in a compact set can be approximated by an NN with an arbitrarily small error by a sufficiently large number of nodes. In other words, the arguments of an unknown function starting from any initial compact superset $\Omega^0$ remains within a compact set $\Omega$. Briefly, the idea is to estimate all nonlinearities by NNs and cancel them by the virtual control laws.

Two NNs are widely used in literature, they are:
\begin{itemize}
\item Two-layer radial boundary function NN (RBFNN) \cite{ge2002NN}
\begin{equation}
	f(z) = W^\top S (z) + \varepsilon,
\end{equation}
\item Multilayer neural networks (MNN) \cite{zhang:1999design,zhang2000adaptive} and three-layer wavelet NN (WNN) \cite{ho:2005adaptive}
\begin{equation}
	f(z) = W^\top S(D^\top z + T) + \varepsilon = W^\top S(U^\top \bar{z}) + \varepsilon,
\end{equation}
\end{itemize}
where
$W=[w_1,\cdots,w_l]^\top \in\mathbb{R}^{l}$ and $V=[v_1,\cdots,v_l]^\top \in\mathbb{R}^{q \times l}$ are the first-to-second and second-to-third layer weight vectors,
$S(z)=[s_1(z),\cdots,s_l(z)]^\top$, 
$T=[t_1,\cdots,t_l]^\top \in\mathbb{R}^{l}$, and 
$\varepsilon$ is the corresponding reconstruction error.
MNN and WNN share a similar format by using state transformation $U^\top = [D^\top,T]$ and
$\bar{z} = [z^\top, 1]^\top$. The difference among various NNs is the selection of basis function $s_i$. For a RBFNN, $s_i(z)$ is defined as
$s_i(z) = \exp \left[-\frac{(z-\mu_i)^\top(z-\mu_i)}{r_i^2} \right]$,
where $\mu_i=[\mu_{i1},\cdots,\mu_{in}]^\top$ is the center of the receptive field and $r_i$ is the width of the Gaussian function.
Then, function $f$ is approximated as $\hat{f} = \hat{W}^\top S (Z)$ and $\hat{W}^{\top} S(\hat{U}^{\top}\bar{z})$. The fundamental assumption for the neural adaptive control is the boundness of the weight matrix $W$.

The ideal weights $W$ are assumed to be bounded with known $W_m$, i.e., $\|W\|_F \leq W_m$. The NN approximation is compensated by the virtual control. Define the error vector of weights as $\tilde{W}=W - \hat{W}$, a quadratic term $\frac{1}{2}\tilde{W}^\top \tilde{W}$ is added to the LFC. To ensure the value of $\gamma$ in Lemma~\ref{lemma:paper13.boundedV} is sufficiently large, a $\sigma$-modification is adopted, i.e., an extra $- \Gamma \sigma \hat{W}$ with $\sigma>0$ is added in the adaptive update law $\dot{\hat{W}}$ which is similar to $\dot{\hat{\theta}}$ in integrator backstepping. The crucial inequalities in the final stability proof are $\tilde{W}^\top\hat{W} = \tilde{W}^\top W - \tilde{W}^\top\tilde{W}\leq \frac{1}{2}(-\tilde{W}^\top\tilde{W} + W^\top W)$ and $\text{tr}\{\tilde{W}^\top\hat{W}\} \leq \frac{1}{2}(-|\tilde{W}|_F + |W|_F)$, which are proved by $\tilde{W}^\top W \leq \frac{1}{2}(\tilde{W}^\top\tilde{W} + {W}^\top W)$. Combining with the assumption of bounded $W$, the terms $-\frac{1}{2}\sigma \tilde{W}^\top\tilde{W}$ and $\frac{1}{2}\sigma W^\top W$ are merged into $-\gamma V$ and $\delta$ in Lemma~\ref{lemma:paper13.boundedV}, respectively.
The relation also holds to $\theta$, that is, $\tilde{\theta}^\top\hat{\theta}\leq \frac{1}{2}(-\tilde{\theta}^\top\tilde{\theta} + \theta^\top \theta)$. Again, they are merged to $-\gamma V$ and $\delta$ in Lemma~\ref{lemma:paper13.boundedV}, respectively.

FLS is a ratio of a linear combination of the input variables membership functions and the sum of the membership functions, given by
$f(x) = \theta^\top \varphi(x) + \varepsilon$ or $f(x) = W^\top S(x) + \varepsilon$ and $\varepsilon \leq \bar{\varepsilon}$ where $\bar{\varepsilon}$ is a bounded constant,
$\theta = [\bar{y}_1,\cdots,\bar{y}_N]^\top$,
$\bar{y}_l = s_i = \max\limits_{y\in \mathbb{R}} \mu_{G^l}(y)$,  
the fuzzy-membership function $\mu_{G^l}(x_i)=\exp[-\frac{x_i-a_i^l}{b_i^l}]$,
$\varphi^\top = [\varphi_1,\cdots,\varphi_N]$, and
the fuzzy basic function $\varphi_l = \frac{\prod_{i=1}^{N} \mu_i^l(x_i)}{\sum_{l=1}^{N} \prod_{i=1}^{N} \mu_i^l(x_i)}$
\cite{lewis:1999deadzone}.
Function $\varphi$ can be the Fourier series expansion which has a similar form to three-layer NNs \cite{chen:2010adaptivePeriodicDistrb}. The important properties are $0 \leq \varphi^\top \varphi \leq 0$ and $z_i \theta^\top \varphi \leq \frac{z_i^2}{4\lambda}|\theta|^2 + \lambda$.

The drawbacks of the estimation-based methods are the long learning time resulting from the significant number of NN nodes and adaption parameters in order to receive sufficient approximation accuracy. An explosion of states occurs to high-order systems resulting in difficult implementations. Most examples in the case studies are second-order systems. To reduce the size of the adaptive control question, $|W|$ can be used instead of $W$ \cite{yang:2004combined}. Additionally, the design procedure is based on the assumed existed upper bounds of the weight matrices. The stability is restricted to be local as the NN approximation is only valid in the specific compact sets. Another difficulty is that the term $\frac{\partial \alpha_i}{\partial \hat{\theta}_i}\dot{\hat{\theta}}_i$ for MIMO systems is a function of all states. It cannot be merged into other terms which to be estimated by the following NNs.

Another shortage of using NNs and FLSs is the lack of capacity to extract the underlying structures of the nonlinear functions, though the estimation fits well with the nonlinear function. The typical adaptive control can be applied to identify the system by using some preset important governing terms. If the basis function library is well guessed, the function structure can be identified explicitly. On the other hand, the estimate of an NN or FLS does not require a well-guesses basis function library. 

\subsection{Nussbaum function}\label{sec.paper13.NussbaumFunction}
In the aforementioned sections, a tricky assumption for adaptive backstepping design is the unchanged sign of the known control gain, namely, constant control direction. Nussbaum function was firstly proposed in \cite{nussbaum:1983some} and introduced to adaptive backstepping design by \cite{ye:1998adaptive}. It has been used to design control laws for systems with unknown $g_i$, for example, with a lack of priori knowledge of control direction.

A function $\mathcal{N}(\chi)$ is said to be a Nussbaum-type function if
$\lim\limits_{s\to\infty} \sup \frac{1}{s} \int_{0}^{s} \mathcal{N}(\chi) d\chi = + \infty$ and $\lim\limits_{s\to\infty} \inf \frac{1}{s} \int_{0}^{s} \mathcal{N}(\chi) d\chi = - \infty$ hold. Its integral is defined as $\mathcal{M}(\chi):=\int_{0}^{\chi} \mathcal{N}(\tau) d \tau$. .

There are two types of Nussbaum-type functions, i.e., the amplitude-elongation and the time-elongation\cite{chen:2016saturated}. Amplitude-elongation Nussbaum-type functions are commonly adopted which are the products of a class $\mathcal{K}_\infty$ function and a trigonometric function, for example,
\begin{itemize}
\item $\mathcal{N}(\chi) = \chi^2 \cos(\chi)$,
\item $\mathcal{N}(\chi) = \chi^2 \sin(\chi)$,
\item $\mathcal{N}(\chi) = \exp(\chi^2) \cos(\chi \pi /2)$ \cite{nussbaum:1983some},
\item $\mathcal{N}(\chi) = \frac{1}{2}e^{-\sigma \chi} \cos\chi - \frac{1}{2}e^{\sigma\chi}\cos\chi$ \cite{chen:2015adaptive},
\item $\mathcal{N}(\chi) = \cosh(\lambda\xi) \sin(\xi)$ \cite{chen:2014adaptive},
\item $\mathcal{N}(\chi) = \exp(\xi^2/2)(\xi^2+2)\sin(\chi)$ \cite{ding:2015adaptive}.
\end{itemize}

The virtual control law is designed to change the time derivative of LFC to a form, for example, in $V(t) \leq c_0 + e^{-c_1 t}\int_{0}^{t} (g_i(t) \mathcal{N}(\chi) + 1)\dot{\chi} e^{c_1\tau}d\tau$. By multiplying $e^{c_1t}$ to both sides, the boundedness of term $g_1 z_1 z_2$ can be proved with the Lemmas proposed in \cite{ye:1998adaptive,ge:2003robust}. Similar to Lemma~\ref{lemma:paper13.boundedV}, an extension show the boundness on $[0,t_f)$ of $V(t)$, $\chi_i(t)$, and $\sum_{i=1}^{n}\eta_i\int_{0}^{t}\mathcal{N}(\chi_i(\tau))\dot{\chi}_i(\tau) d\,\tau$ is given \cite{chen:2014adaptive}, if the following inequality holds
\begin{equation}
\begin{aligned}
	V(t)\leq &\sum_{i=1}^{n}\eta_i\int_{0}^{t}\mathcal{N}(\chi_i(\tau))\dot{\chi}_i(\tau) d\,\tau \\ &+ \sum_{i=1}^{n}a_i\int_{0}^{t}\dot{\chi}_i(\tau)\,\tau + \delta ,\ \forall t\in[0,t_f),
\end{aligned}
\end{equation}
where $a_i$ and $\eta_i$ are constants with $a_i>0$, $\eta_i$ has the same sign, and $|\eta_i|\in[\eta_{\min},\eta_{\max}]$. 


The design is achieved by replacing $g_i \alpha_i$ by $g_i \alpha_i= g_i \mathcal{N}(\chi) \bar{\alpha}_i = (g_i \mathcal{N}(\chi)-1) \bar{\alpha}_i + \bar{\alpha}_i$ and canceling the nonlinearities by designing $\alpha_n$. Then the update law of $\chi$ is assigned to be $\dot{\chi} = z_i \bar{\alpha}_i$ and stability is proved. The variable $\chi$ does not enter the LFC.

The Nussbaum gain technique has been successfully applied in many problems presented in section~\ref{sec:paper13.single_system}.
A drawback of Nussbaum gain is that the disturbances can not be compensated directly. 


\subsection{Barrier Lyapunov function}\label{sec:BLF}
Inspired by \cite{ngo:2005integrator} and proposed in \cite{tee:2009barrier}, barrier Lyapunov function (BLF) is very popular LFC tool in backstepping-like design. It is compatible with nearly all techniques in this survey. Significant amounts of works use BLF to handle the asymptotic stabilization problem with partial or full state constraints. Unlike a typical Lyapunov function holding a globally radically unbounded property, a BLF goes to infinity when any state approaches its barriers.  

A BLF is a scalar function $V_{BLF}(x):\mathbb{D}_x\to\mathbb{R}_+$, defined with respect to the system $\dot{x}=f(x)$ on an open region $\mathbb{D}_x$ containing the origin, that is (i)
continuous and positive definite, (ii) has continuous first-order partial derivatives at every point of $\mathbb{D}_x$, 
(iii) has the property $\dot{V}(x) \to \infty$ as $x$ approaches the boundary of $\mathbb{D}_x$, and (iv) satisfies $V_{BLF}(x) \leq b$ for all $t \geq 0$ along the solution of $\dot{x}=f(x)$ for $x(0)\in \mathbb{D}_x$ and some positive constant $b$.
The BLFs are presented in Figure~\ref{fig:paper13.SymmetricAsymmetricBarrierLyapunovFunction}.
For any positive constants $k_{ai}$ and $k_{bi}$, let $\mathbb{D}_{zi}:=\{z_i \in \mathbb{R}| -k_{ai} < z_i < k_{bi} \} \subset \mathbb{R}$, the BLF $V_{i,BLF}(z_i) \to \infty$, as $z_i \to -k_{ai}$ or $z_i \to k_{bi}$. 
The tracking error $z_i(t)$ is proved to be remained in the open set $z_i \in \mathbb{D}_{zi}$ for all $t\geq 0$ if $\dot{V} \leq 0$ holds. Again, $z_i(t)$ remains bounded for all $t\geq 0$ if the inequality $\dot{V} \leq -\gamma V + \delta$ holds \cite{tee:2009barrier}.

The are several ways to construct BLFs depending on the type of the constraints. Three widely-used BLFs and their time derivatives are listed as follows:
\begin{itemize}
\item $\text{[BLF1]}$ Symmetric barriers ($k_{ai}=k_{bi}$) 
\begin{itemize}
	\item $V_{i,BLF}(z_i) = \frac{1}{2} \log \frac{k_{bi}^2}{k_{bi}^2-z_i^2} = \frac{1}{2} \log \frac{1}{1-\xi_b^2}$,
	\item $\dot{V}_{i,BLF}(z_i) = \frac{z_i}{k_{bi}^2-z_i^2}\dot{z}_i$, 
\end{itemize}
\item $\text{[BLF2]}$ Asymmetric barriers, ($k_{ai}\neq k_{bi}$)
\begin{itemize}
	\item $V_{i,BLF}(z_i)=\frac{q}{p}\log \frac{k_{bi}^p}{k_{bi}^p-z_i^p}+\frac{1-q}{p}\log \frac{k_{a}^p}{k_{ai}^p-z_i^p}$, 
	\item $\dot{V}_{i,BLF}(z_i)= \frac{q z_i^{p-1}}{k_{bi}^p-z_i^p}\dot{z}_i+ \frac{(1-q) z_i^{p-1}}{k_{ai}^p-z_i^p}\dot{z}_i$,
\end{itemize}	
\item $\text{[BLF3]}$ Time-varying constrain situation ($\dot{k}_{ai}\neq 0$ and $\dot{k}_{bi}\neq 0$)
\begin{itemize}
	\item $V_{i,BLF}(z_i)= \frac{q}{p}\log \frac{1}{1-\xi_b^p}+\frac{1-q}{p}\log \frac{1}{1-\xi_a^p}$, 
	\item $\dot{V}_{i,BLF}(z_i)= \frac{q \xi_b^{2p-1}}{k_{bi}(1-\xi_b^2p)}(\dot{z}_i-\frac{z_i}{k_{bi}}\dot{k}_{bi})+\frac{(1-q) \xi_a^{2p-1}}{k_{ai}(1-\xi_a^2p)}(\dot{z}_i-\frac{z_i}{k_{ai}}\dot{k}_{ai})$,
\end{itemize}
\end{itemize}
where $\log(\cdot)$ is the natural logarithm,
$p\geq n$ is an even integer,
switching function $q(z_i)=\begin{cases}1, &if\ z_i>0\\	0, &if\ z_i\leq 0	\end{cases}$,
$\xi_a = \frac{z_i}{k_{ai}}$, and 
$\xi_b = \frac{z_i}{k_{bi}}$. 
Then, BLF3 can be rewritten into a simpler form $
V_{i,BLF}(z_i)= \frac{1}{2p}\log\frac{1}{1-\xi^{2p}}$, where $\xi = q \xi_b + (1-q)\xi_a$. Novel BLF functions are given by $V_{i,BLF} = \frac{k_{bi}^2}{\pi}\tan(\frac{\pi z_i^2}{2k_{bi}^2})$ \cite{xu:2013state} and $V_{i,BLF} = \cot\frac{\pi}{2}(1-(\frac{z_i}{k_{bi}})^2)$ \cite{chen:2017adaptive}.

\begin{figure}
\includegraphics[width=0.8\linewidth]{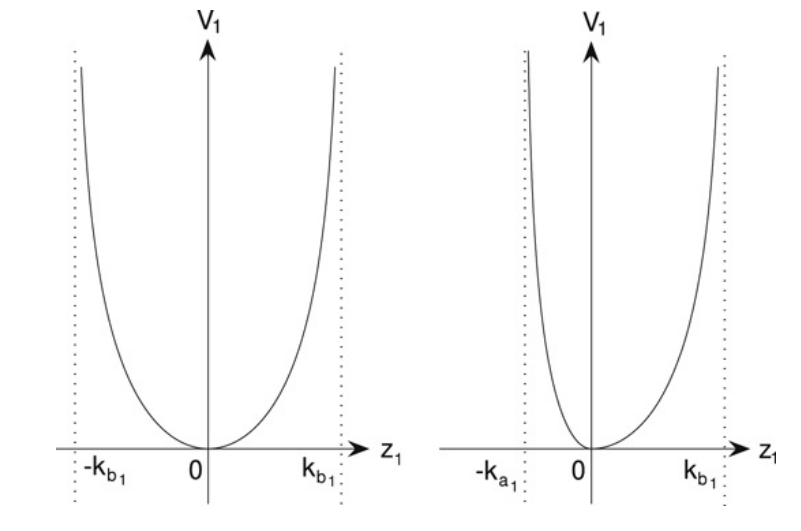}
\caption{Barrier Lyapunov functions.}
\label{fig:paper13.SymmetricAsymmetricBarrierLyapunovFunction}
\end{figure}

The BLF-based method can be applied based on two assumptions. 
First, the control coefficients $g_i$ is constantly positive or negative without changing sign when $x_1(t) \in \mathbb{D}_{x_1}$, i.e., $|g_i(\bar{x_i})|>g_0>0$ where $g_0$ is a positive constant.
Second, there exist positive constants $A_0$, $\underline{x}_{1d}$, $\overline{x}_{1d}$, and $\overline{x}_{id}$, $i=2,\cdots,n$ satisfying $\max\{\underline{x}_{1d},\overline{x}_{1d}\} \leq A_0\leq k_{c1}$ and $- \underline{x}_{1d} \leq x_{1d}(t)\leq \overline{x}_{1d}$ and $|x_{1d}^{(i)}(t)| < \overline{x}_{id}$ for all $k_{c1}>0$ and $t\geq 0$ \cite{tee:2009barrier,ren:2010adaptive}. 
In the design procedures, a BLF, for example in step $i$, is used when the error state $z_i$ is constrained i.e., $V_i=V_{i-1} + V_{i,BLF}$. Otherwise, quadratic LFC is adopted. 

For a system which is not perfectly known, an important inequality, which holds for all $|\xi|<1$ and any positive integer $p$ \cite{ren:2010adaptive}, is given by
\begin{equation}\label{eq:paper13.BLF.ineq1}
\log\frac{1}{1-\xi^{2p}} < \frac{\xi^{2p}}{1-\xi^{2p}}.
\end{equation}
For any positive constant $k_{bi}$, substituting $\xi = \frac{z_i}{k_{bi}}$ yields $\log \frac{k_{bi}^2}{k_{bi}^2-z_i^2} < \frac{z_i^2}{k_{bi}^2-z_i^2}$ and $
\log \frac{k_{bi}^{2p}}{k_{bi}^{2p}-z_i^{2p}} \leq \frac{z_i^{2p}}{k_{bi}^{2p}-z_i^{2p}}$.
Eq.~(\ref{eq:paper13.BLF.ineq1}) simplifies the control law by removing the logarithm operator and avoid the over-overparameterization problem \cite{liu:2016barrier}.
When system unmodeled dynamics exists, applying the inequality removes $k_{b1}^2-z_1^2$ in the virtual control and simplified the control law.
The numerator $-\frac{c_i z_i^2}{k_{bi}^{2p}-z_i^{2p}}$ in $V_n$ is replaced by $-c_i\log\frac{ z_i^2}{k_{bi}^2-z_i^2} = -c_i V_{i,BLF}$. Hence, the boundness can be proved according to Lemma~\ref{lemma:paper13.boundedV} \cite{ren:2010adaptive}.  

For an MIMO system, any positive constant vector $b_i\in\mathbb{R}^{n_i}$,
the following inequality holds for any vector $z_i\in\mathbb{R}^{n_i}$ in the
interval $|z_i| < |k_{bi}|$, $\log \frac{k_{bi}^\top k_{bi}}{k_{bi}^\top k_{bi} -z_i^\top z_i} \leq \frac{z_i^\top z_i}{k_{bi}^\top k_{bi} - z_i^\top z_i}$ \cite{zhao:2014adaptive}.
Then, the error signals are bounded, $|z_2|,\cdots,|z_n| \leq \sqrt{2V(0)\exp(-\rho t)}$, and
$k_{a1}(t) (1-e^{-2pV(0)\exp(-\rho t)})^{\frac{1}{2p}} \leq z_1  \leq k_{b1}(t) (1-e^{-2pV(0)\exp(-\rho t)})^{\frac{1}{2p}}$.
As $t \to \infty$, $|z_1(t)| \leq k_{b1} \sqrt{1-\exp(-2(\rho +  (V(0)-\rho)\exp(-\gamma t)))}  \leq k_{b1} \sqrt{1-\exp(-2\rho)}$. Hence, the upper limit of the tracking error $z_1$ decreases with $\rho$. Therefore, $|z_1|$ can be arbitrary small when $\gamma$ is sufficiently large.


Large control action may result when the states approach the boundary of the boundaries. Another drawback of BLF is the upper and lower limits are assumed to be known which are probably unknown in practical applications. Another concern is the conservative initial state. The initial states have to stay in the constraints, i.e., $x_i(0) \in (-k_{ai},k_{bi})$. At the region near the barrier, $V_i$ will be extremely large. Integral barrier Lyapunov functional (iBLF) $V_{i,iBLF} = \int_{0}^{z_i} \frac{\sigma k_{ai}^2}{k_{ai}^2-(\sigma+\alpha_{i-1})^2} d\sigma$ significantly relaxes the feasibility conditions \cite{tee:2012control}. The properties of iBLF are 
(i) $\frac{z_i^2}{2} \leq V_{i,iBLF} \leq z_i^2 \int_{0}^{1} \frac{\beta k_{ai}^2}{k_{ai}^2-(\beta z_i + \text{sgn}(z_i)A_{i-1})^2} d\beta$ and  
(ii) $\dot{V}_{i,iBLF}  
= \frac{k_{ai}^2 z_i}{k_{ai}^2-x_i^2} \dot{z}_i + z_i(\frac{k_{ai}^2 }{k_{ai}^2-x_i^2} - \rho_i)\dot{\alpha}_{i-1}$, 
where 
$A_i<k_{ai}$ is the upper bound of $|x_{id}|$ and 
$\rho_i = \frac{k_{ai}}{2 z_i}\ln \frac{(k_{ai}+z_1+x_{id})(k_{ai}-x_{id})}{(k_{ai}-z_i-x_{id})(k_{a1}+x_{id})})\dot{x}_{id}$.

\subsection{Hyperbolic tangent function}\label{section:paper14.thanh}
Backstepping-like cancellation is convenient to cancel the terms of states, e.g., $z_i^2$. However, it fails to cancel the unknown scalar terms, e.g., the external disturbance $d_i$. 
Compared with a complex projection operator used in early studies, an innovative inequality is adopted to handle the parametric bounded disturbance. If the disturbance is bounded, the backstepping design process main relays on an inequality for any $\varepsilon > 0$ and $\eta\in\mathbb{R}$, i.e.,
\begin{equation}
0 \leq |\eta| - \eta \tanh(\frac{\eta}{\varepsilon}) \leq k_p \varepsilon,
\end{equation}
where $k_p$ is a constant that satisfies $k_p = e^{-(kp+1)}$, e.g., $k_p = 0.2758$. 
The aim of the inequality is to transfer the product of a scalar and the error state, i.e., $d_i z_i $, into a form with a term of explicit $z_i$ and bounded constant in $\dot{V}_i$. The first part can be easily canceled and the later part can be integrated into $\delta$ in \eqref{eq:paper13.boundedVLemma.V}.
The error between the absolute value and its hyperbolic tangent approximation is presented in Figure~\ref{fig:paper13.operator.renzr_tanh}. The introduced $\tanh(\cdot)$ is compensated by the virtual control law in the recursive design, and the final LFC can be proved to be bounded using Lemma~\ref{lemma:paper13.boundedV}. 

The system uncertainties and model errors can be equivalent to a disturbance $d_i$, for example, the input-deadzone effects. In addition, a $\tanh(\cdot)$ function is used to approximate the symmetric input saturation and absolute operator to prevent the nonsmoothness in the time derivatives.

\begin{figure}
\begin{center}
	\includegraphics[width=1 \linewidth]{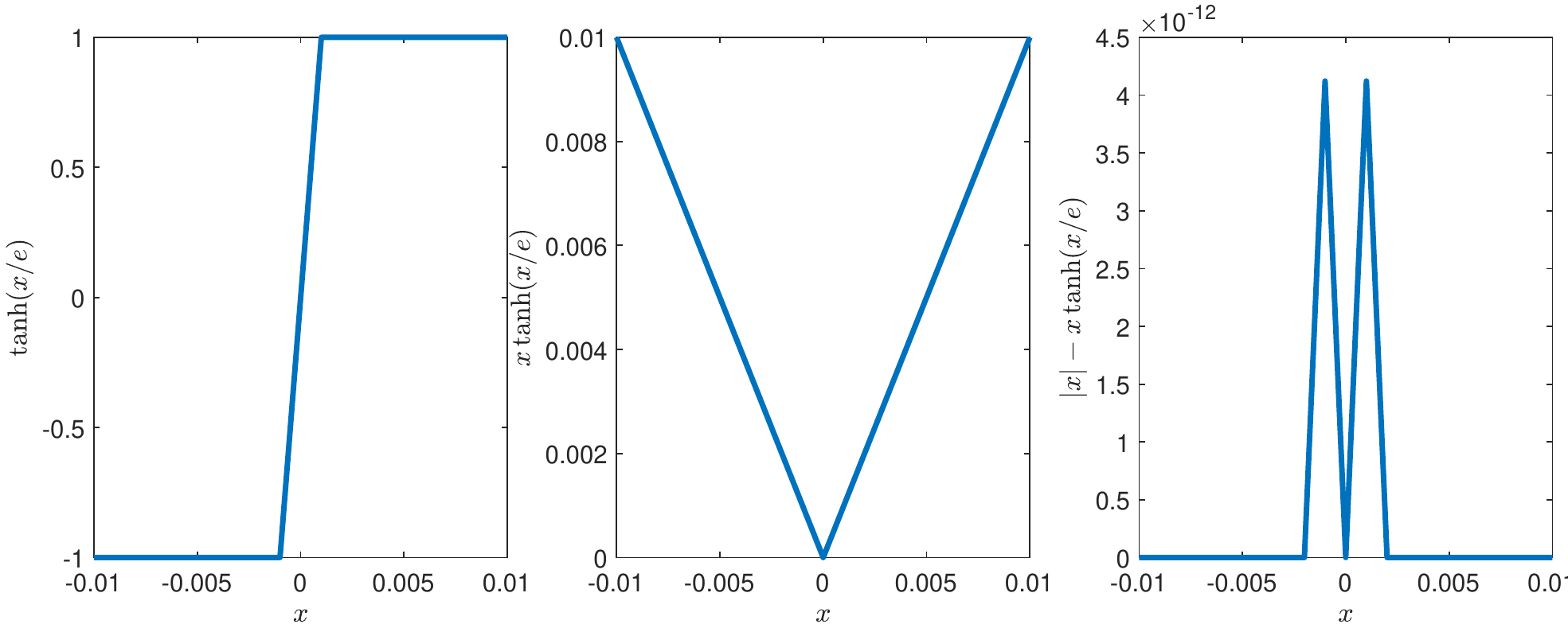}
	\caption{tanh($\cdot$) function ($e=0.001$).}
	\label{fig:paper13.operator.renzr_tanh}
\end{center}
\end{figure}

The singularity problem challenges the backstepping design. 
If a positive unknown term $a$ exists in $\dot{V}_i$ resulting in $-\frac{a}{z_i}$ in the virtual control, the system faces singularity problem when $z_i = 0$, resulting in the term $-\frac{a}{z_i}$ is impossible to be estimated. To prevent such issue, a property is given, i.e., for any constant $\eta>0$ and variable $z_i\in\mathbb{R}$, $\lim_{z_i \to 0} \frac{1}{z_i} \tanh^2(\frac{z_i}{\eta}) = 0$ \cite{ge:2007approximation}. The term $a$ can be modified to be $a = \frac{2}{z_i}\tanh^2(z_i/\eta) a z_i + \left(1-2\tanh^2(z_i/\eta)\right) a$. The former part $\frac{2}{z_i}\tanh^2(z_i/\eta) a z_i$ can be estimated. An addition useful property is $1-2\tanh^2(z_i/\eta)\leq 0$ when $|z_i|\geq 0.8814\eta$. Besides, $z_i$ is already bounded by $\eta$ if $|z_i|< 0.8814\eta$.

Another property of $\tanh(\cdot)$ function is that inequality $1-16\tanh(\frac{z_i}{v_i}) \leq 0$ holds for $|z_i| \leq 0.2554 v_i$ \cite{ge:2007approximation}. It is used to compensate the radical unbounded term in $\dot{V}_i$, e.g., $z_i^2 = [1-16\tanh(\frac{z_i}{v_i})]z_i^2 + 16\tanh(\frac{z_i}{v}_i) z_i^2$ and the latter term $16\tanh(\frac{z_i}{v_i}) z_i^2$ can be approximated by the NNs. 

Furthermore, $\tanh(\cdot)$ function prevents singularity which will be discussed later.

\section{Systematic backstepping designs to various nonlinearities}\label{sec:paper13.single_system}
In this section, backstepping designs applying to a class of systems are discussed. Systems with various nonlinearities are transferred to specific forms that are suitable for the aforementioned elegant approaches. A summary is given in Fig.~\ref{fig:paper13.nonlinearities}.

\begin{figure*}
	\includegraphics[width=0.9\linewidth]{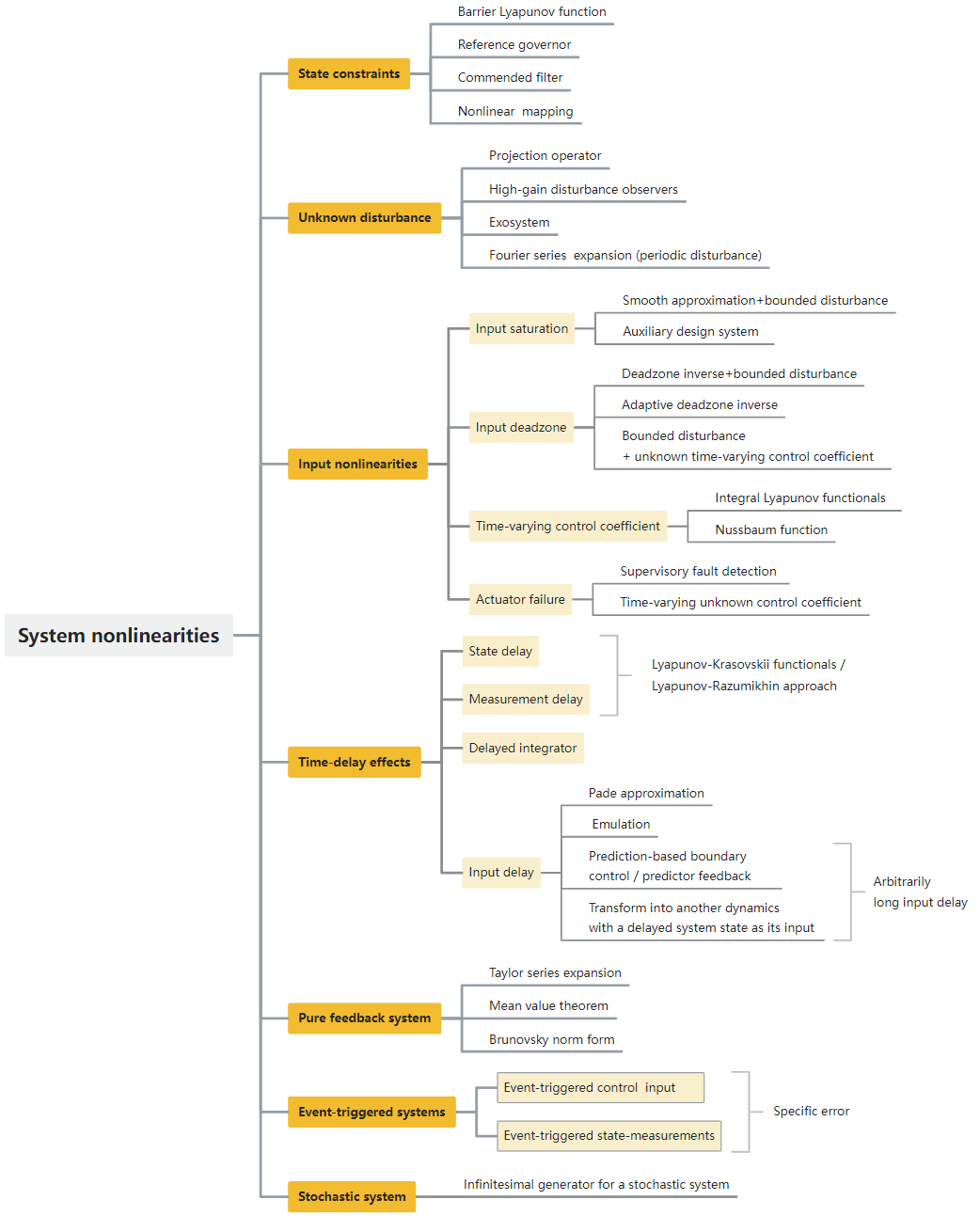}
	\caption{A summary of nonlinearities and corresponding design method.}
	\label{fig:paper13.nonlinearities}
\end{figure*}

\subsection{State constraints}
Due to the existence of actuator physical limitation and operational limits, there is a clear boundary that the states or the tracking error can reach in some applications, i.e., $x(t) \in \mathcal{D}_x$ or $z_1(t)=y(t)-y_d(t)\in \mathcal{D}_{z1}$. The state constraints are due to physical limitations or performance requirements. For example, in system (\ref{eq:paper13.strictFeedbackForm}), the output $y(t)$ is required to remain in the set $|y| \leq k_{c1}$ for all $t \geq 0$, where $k_{c1}$ is a positive constant. However, typical backstepping design methods cannot ensure state constraints. Besides model predictive control, there are a few Lyapunov methods to handle state constraints.

BLF is the most popular approach due to its systematic deduction and simple form. Hundreds of paper adopt BLF to handle state constraints. This has been intensively reviewed in Section~\ref{sec:BLF}.

The reference governor is another method to handle the state constraint problem. To avoid violating the system constraints, the reference signal is modulated using online optimization
\cite{gilbert:2002nonlinear}. However, this method is not specific for adaptive backstepping design. 

The commended filter can be used by regulating the changing rate of $\alpha_i$ the states. This is achieved by modifying the filters with several saturation operators \cite{sonneveldt:2007nonlinear}. However, state constraints are not proved. The saturation operators in the filter may deteriorate the stability to the entire system through critical damping is selected.

State constraints can be canceled by introducing a nonlinear mapping, which transfers a constrained nonlinear system into a pure-feedback form without outputs or state constraints \cite{zhang:2017adaptiveNeural}.

\subsection{Unknown disturbance}\label{sec:disturbance}
Disturbance is a very common unmodeled dynamics in a practical system due to the external environment. Besides, the estimate error due to modeling uncertainty is always modeled as unknown disturbance. A typical disturbed nonlinear system is given by
\begin{subequations}\label{eq:paper13.backstepping.boundedDisturbance.chen2011_adaptive.system}
	\begin{align}
		\dot{x}_i &= f_i(\bar{x}_i) + g_i(\bar{x}_i)x_{i+1} + d_i, \ i=\mathcal{I},\\
		\dot{x}_n &= f_n(\bar{x}_n) + g_n(\bar{x}_n) u+ d_n,
	\end{align}
\end{subequations}
where $d_i$ ($i=1,\cdots,n-1$) and $d_n$ are the mismatched and matched disturbances, respectively. Disturbances are impossible to be directly measured, resulting in the robustness violation and failure of the direct compensation approaches. A disturbance is unmatched when the disturbance $d_i$ enters the system with a different state from the control input $u$; otherwise it is matched. A system with unmatched disturbance can be converted into one with matched disturbance using equivalent input disturbance technique \cite{ding:2008asymptotic}. 

The designs commonly depend on bounded-disturbance assumptions. 
The widely-used assumptions are summarized as follows:
\begin{itemize}
	\item $|d_i|\leq \bar{d}_i$ \cite{koshkouei:2000adaptive},
\item $|d_i|\leq \rho_i(\bar{x}_i)\theta_i$ \cite{polycarpou:1996robust},
\item $|d_i|\leq (\rho_{i1}(\bar{x}_i) + \rho_{i2}(\bar{z}_i))\theta_i$ \cite{tong:2010adaptive},
\end{itemize}
The first one assumes a upper limits $\overline{d}_i$ for $d_i$. Using the second and third assumptions,  the disturbance can be separated into unknown bounded constants $\theta_i$ and known smooth functions $\rho_i(\bar{x}_i) \in\mathbb{R}_+$ for $t > t_0$. Disturbance rejection control can be categorized into robustness-based and estimation-based (or approximation-based) methods.

\subsubsection{Robustness-based methods}
Projection operator is widely used at the early stage of the robust adaptive control. A projection operator for two vectors $\theta,\ y \in \mathbb{R}^k$ is defined as
\begin{equation}
\text{Proj}(\theta,y,f)=\begin{cases}
y-\frac{\bigtriangledown f(\theta) \bigtriangledown f(\theta)^\top}{\|\bigtriangledown f(\theta)\|^2} y f(\theta),\quad  \\ \ \ \ \  \text{if}\ f(\theta)>0 \bigwedge y^\top \bigtriangledown f(\theta)>0, \\
y, \qquad \text{otherwise},
\end{cases}
\end{equation}
where $f:\mathbb{R}^k\to \mathbb{R}$ is a convex function and  $\bigtriangledown f(\theta_b)=[\frac{\partial f(\theta)}{\theta_1} \cdots \frac{\partial f(\theta)}{\theta_k}]^\top$.
The core property of the projection operator $\tilde{\theta} (\text{Proj}(\theta,y,f) - y) \leq 0$ can be used to design adaptive update law. Besides the basic projection operator, there are several different projection operators, such as $\Gamma-$Projection \cite{Lavretsky:2011projection} and parameter projection \cite{ikhouane:1998adaptive,Grip:2015globally}. The exact structure of the convex function $f$ is of no importance. When using projection operators in the backstepping design, sufficient differentiability is required. The projection operators cooperate with other elegant methods. However, it is no longer popular anymore after the 2000s. A reason is the complex calculation of the terms with Laplace operator $\bigtriangledown f$. 

Combining with Lemma~\ref{lemma:paper13.boundedV}, simpler control algorithms are proposed, e.g., \cite{koshkouei:2000adaptive}. A more popular method is the $\tanh(\cdot)$ function presented in Section~\ref{section:paper14.thanh}. When the bounded uncertain term $d_i$ appear, a major part of the term $|d_i||z_i| $ in $\dot{V}_i$ can be compensated and the uncancellable parts are merged into $\delta$ in \eqref{eq:paper13.boundedVLemma.V}.

\subsubsection{Estimation-based methods}
Though the system disturbance cannot be directly measured, several techniques are available to estimate the disturbance with measurements. The overall stability is proved with a combination of the control law and observer. Hence, the estimate error should be included in the Lyapunov function.

As mentioned before, NNs and FLSs can be used to approximate all sort of bounded nonlinearities. Hence, they have been intensively investigated and widely applied in literature. We skip here since they have been reviewed them in Section~\ref{sec:NN_FLS}.

Several high-gain disturbance observers are designed based on linear system theories with similar forms
\cite{chen:2010sliding,won:2015high}. In another way, a lowpass filter is used as an disturbance observer by filtering the error $\dot{z}_{i}-(f_i+g_i-\dot{\alpha}_{i-1})$. However, $\dot{z}_i$ measurements are not always available. A possible solution is to replace $\dot{z}_i$ with $-\mu_i z_i$, and the estimated disturbance  $\hat{d}_i$ is the output from the lowpass filter plus $\mu_i z_i$ \cite{rashad:2016novel}. The LFC for Step $i$ is modified to be $V_i = V_{i-1} + V_{i,QF} + \frac{1}{2} \tilde{d}_i^2$, and $\hat{d}_i$ is canceled by the virtual control law directly. 

According to the internal model principle, deterministic disturbances with a priori known waveform can be expressed in a form an exosystem \cite{li:2006internal}, the disturbance can be compensated by invoking the exosystem in the feedback loop. The problem is firstly solved by proposing an nonlinear observer of the unknown external disturbance and using a state-feedback adaptive regulator \cite{nikiforov:1998adaptive}. Different from NN/FLS, parameter-dependent observer for unknown disturbance based on exosystem has a simpler structure. The bounded disturbance can be presented as $d = \vartheta + \delta_{d}$, where $\delta_{d}$ is a bounded function, $\vartheta$ is the output of a linear exosystem given by
\begin{subequations}\label{eq:paper13.exosystem}
\begin{align}
\dot{\chi} &= \Gamma \chi, \\
\vartheta &= c^\top \chi,
\end{align}
\end{subequations}
where 
$\chi\in\mathbb{R}^{q}$ is the state vector, 
the constant $q \times q$ matrix $\Gamma$ has all its eigenvalues on the imaginary axis, and $c$ is a constant vector. Without loss of generality, the pair $(c^\top,\Gamma)$ is assumed to be observable. Here, the dimension of the exosystem $q$ is a design parameter, $c$ and $\Gamma$ are unknown. Similar to the estimation-based approach, higher-order exosystems are adopted when complex disturbances are involved. The application of this disturbance observer technique can be  found in both unmatched and matched disturbances \cite{nikiforov:2001Nonlinear,sun:2014composite}.

A specific case is the periodic disturbance. Periodic disturbances often exist in many mechanical control systems such as industrial robots and marine systems. Fourier series expansion can be adopted to model the disturbance, i.e., $d(t)=B^\top \Phi(t)+\delta_d(t)$ with bounded truncation error $|\delta_d(t)|\leq \overline{\delta}_d$, where $\Phi=[\phi_1,\cdots,\phi_q]^\top$, $\phi_1=1$, $\phi_{2j}=\sqrt{2}\sin(2\pi jt/T)$ and $\phi_1=1$, $\phi_{2j+1}=\sqrt{2}\cos(2\pi jt/T)$ \cite{chen:2010adaptivePeriodicDistrb}.

However, a common challenge for the observer-based approaches is the process to model a rapidly changing disturbance. The observer may react not fast enough to estimate the high-frequency disturbances. It is argued that if the observer dynamics are much faster than the system dynamics, the disturbance is very likely to be estimated and canceled by the controller. The training time in addition depends on system complexity. The stability of the entire closed-loop system can be deteriorated due to the large model mismatch.

\subsection{Input nonlinearities}
For a system with perfect actuator performance, the actuator output $u$ equals to the control signal $v$, i.e., $u(t)=v(t)$. However, $u$ is a function of $v$ due to the actuator characteristics and limitations in the form of $u(v(t-\tau_d(t)))$ where $\tau_d$ is the time delay. In this section, several input nonlinearities are reviewed, assuming $\tau_d=0$.

\subsubsection{Input saturation}
Physical actuators surely have their limits, also called input constraints. When the control input remains within the boundness, the input saturation effects are negligible. The main challenges for input saturation is the nonsmoothness at the breakpoints and limited scope of input. To stabilize a system with bounded control input has significant meaning to practical industrial applications. A general form of input saturation is given by
\begin{equation}
u(v) = \text{sat}(v) = \begin{cases}
u_{min} , & \text{if }  v\leq u_{min}, \\
g_s(v), & \text{if }  u_{min} < v < u_{max}, \\
u_{max} , & \text{if }  v\geq u_{max},
\end{cases}
\end{equation}
where 
$g_s$ is a smooth function, and
$u_{min}$ and $u_{max}$ are the minimum and maximum values for $u$.
For a simplest linear saturation with symmetric limits, i.e., $g_s(v) = v$ and $-u_{min}= u_{max}=u_m>0$, the input saturation is simplified to be
\begin{equation}\label{eq:paper13.linearSaturation}
u(v(t)) = \text{sat}(v(t)) = \begin{cases}
v(t), & \text{if }  |v|<u_m, \\
\text{sgn}(v(t)) u_m, & \text{if }  |v|\geq u_m.
\end{cases}
\end{equation}

Smooth approximations are adopted to overcome the nonsmoothness. 
\begin{itemize}
\item The symmetric saturation (\ref{eq:paper13.linearSaturation}) is approximated by \cite{zhou:2006adaptiveSaturation,zhou:2017adaptive}
\begin{equation}
\text{sat}(v)\approxeq \eta_{sat}(v) = u_m\tanh(\frac{v}{u_m})=u_m \frac{e^{v/u_m}-e^{-v/u_m}}{e^{v/u_m}+e^{-v/u_m}}.
\end{equation}

\item The approximation for the asymmetric saturation is given by 
\begin{equation}
\text{sat}(v)\approxeq\eta_{sat}(v)=\frac{2 \overline{u}}{\pi} \arctan \left(\frac{\pi v}{2\overline{v}} \right),
\end{equation}
\end{itemize}
where $\overline{u} = u_{max}$, $\overline{v}=v_{max}$ if $u_{max}/v_{max}\geq u_{min}/v_{min}$ and $\overline{u} = u_{min}$, $\overline{v}=v_{min}$ if $u_{max}/v_{max}\leq u_{min}/v_{min}$ \cite{chen:2017adaptive}. 

The approximation difference $d_{sat}(v) = \text{sat}(v)-\eta_{sat}(v)$ is a bounded function with a upper limit $|d_{sat}(v)|\leq u_m (1-\tanh(1))=\overline{d}_{sat}$. 
There are a few approaches to continue after the smooth approximation, i.e., 
(i) Consider $d_{sat}$ as a bounded disturbance, then the robustness of the system can be proved with Lemma~\ref{lemma:paper13.boundedV} using disturbance rejection control shown in Section~\ref{sec:disturbance}.
(ii) The system is augmented with an extra state equation $\dot{v}= -cv+w$ and $z_{n+1} = u - v$. A Nussbaum function is employed to handle the nonlinearity caused by $\frac{\partial \eta_{sat}(v)}{\partial v}$ which is impossible to be used in cancellation-based design procedures \cite{wen:2011robust}.

\begin{figure}[h]
\begin{center}
\includegraphics[width=0.6\linewidth]{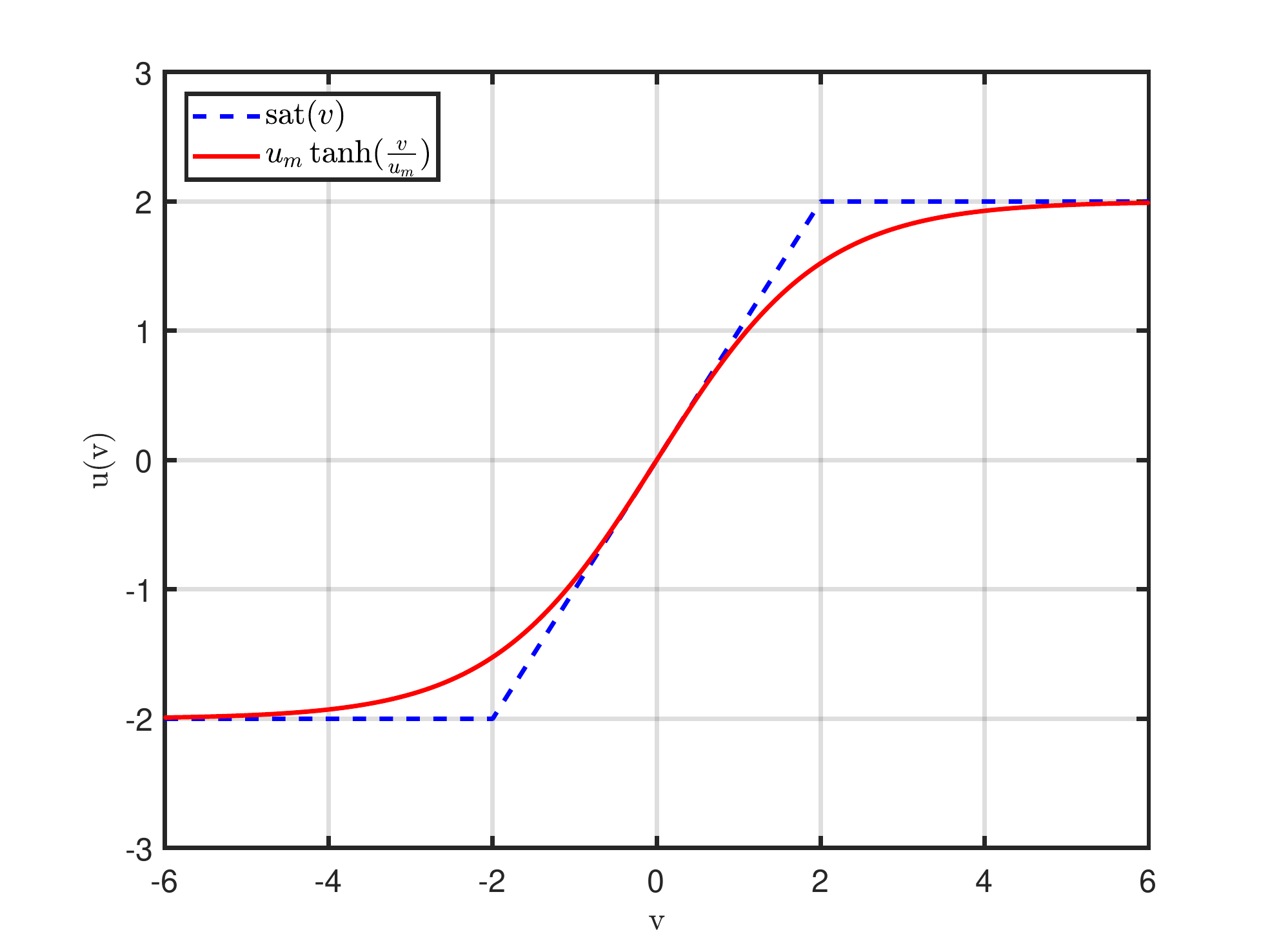}
\caption{Saturation approximation ($u_m = 2$).}
\label{fig:paper13.sat_apprx}	
\end{center}
\end{figure}

If the saturation model is known, an auxiliary design system is introduced to modify the control law from normally backstepping to $v=g_n^{-1}[g_{n-1}z_{n-1}+f_n+\dot{\alpha}_{n-1}- \phi_n^\top\hat{\theta}_n-c_n(z_n-e)]$, with an auxiliary state $e$. The state dynamics $\dot{e}$ is updated with $\Delta u = u-v$ when $e$ is greater than a specific boundary \cite{chen:2011adaptive}.

\subsubsection{Input deadzone}
Deadzone occurs frequently in industrial applications, e.g., gear transmission servo system, DC motor, hydraulic aircraft elevator control system, and valve. It is a memoryless nonlinearity, a non-differential function, and insensitive to a small control input. Consequently, the effect of the deadzone is the undesired chattering, which is a problem in high-precision control. In practice, the deadzone barriers are normally unknown.

There are several methods to handle the deadzone effects, such as deadzone inverse, adaptive control with NN, and robust control with a combination of linear input and bounded disturbance.

A general expression of deadzone operator is presented in Figure~\ref{fig:paper13.deadzone_general} and given by
\begin{equation}
u = \text{Dead}(v)= \begin{cases}
g_r(v), & \text{if } v \geq b_r, \\
0,      & \text{if } b_l < \theta < b_r, \\
g_l(v), & \text{if } v \leq b_l, 
\end{cases}
\end{equation}
where $g_r$ and $g_l$ are the functions in the right and left parts,
$b_l$ and $b_r$ are the barriers in the right and left parts. The challenges of the deadzone nonlinearity is to find the deadzone inverse, i.e., $v=\text{Dead}^{-1}(u)$. The parameters for a deadzone operator are the breaking points $b_r$ and $b_l$, slopes of $g_r$ and $g_l$. Hereafter, $b_r>0$ and $b_l<0$.
The breaking points can be known or unknown, symmetric or asymmetric. The slopes of $g_r$ or $g_l$ can be constant or a function, and they can be the same or not in the positive and negative domains. 
From simple to complex, three types of deadzones are discussed.

\begin{figure}[h]
\begin{center}
\includegraphics[width=0.6\linewidth]{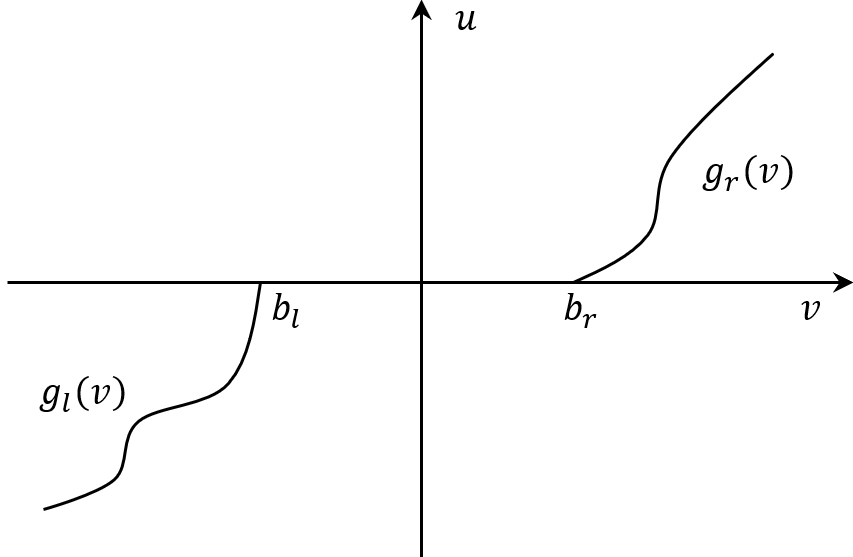}
\caption{General deadzone operator.}
\label{fig:paper13.deadzone_general}	
\end{center}
\end{figure}

We start from the simplest form with linear slopes in both sides, i.e.,
\begin{equation}
u = \text{Dead}(v)= \begin{cases}
k_r (v-b_r), & \text{if } v \geq b_r, \\
0,     			 & \text{if } -b_l < \theta < b_r, \\
k_l (v-b_l), & \text{if } v \leq b_l, 
\end{cases}
\end{equation}
where $k_r$ and $k_l$ are the slopes in the right and left sides, respectively. When all the parameters are known, a direct deadzone inverse \cite{tao:1996adaptive} can be applied, given by
\begin{equation}\label{eq:paper13.deadzone.inverse1}
v = \text{Dead}^{-1}(u)= \begin{cases}
u/k_r + b_r, & \text{if } u \geq 0, \\
0,     			 & if u = 0, \\
u/k_l + b_l, & \text{if } v \leq 0.
\end{cases}
\end{equation}
However, the deadzone inverse (\ref{eq:paper13.deadzone.inverse1}) is not a smooth function. 
The asymmetric deadzone inverse approximation with constant slopes and known width can be $v \approxeq \eta_{dead}(u) = \frac{u + k_r b_r}{k_r}\phi_r + \frac{u + k_l b_l}{k_l}\phi_l$, where $\phi_r = \frac{e^{u/e_0}}{e^{u/e_0}+e^{-u/e_0}}$, $\phi_l = \frac{e^{-u/e_0}}{e^{u/e_0}+e^{-u/e_0}}$, and $e_0$ is a designed parameter \cite{zhou:2006adaptiveDeadzone}. More general deadzone inverse models are tabulated in Table~\ref{tb:paper13.differentialAsymmetricDeadzone} \cite{zuo:2016control}. The approximation error, defined by $d_{dead}(v) = \text{Dead}(v) - \eta_{dead}(v)$, is bounded and can be made arbitrarily small if the parameter $\rho_{dz}$ is sufficiently large, i.e., $\lim_{\rho_{dz} \to \infty} |\eta_{dead}(v)| = 0$. For the second equation in Table~\ref{tb:paper13.differentialAsymmetricDeadzone}, $d_{dead}(v)<k_{dz} \ln\frac{2}{\rho_{dz}}$. However, a large $\rho_{dz}$ results in aggressive close-loop response which may degrade the system.

\begin{table*}[t]
\newcommand{\tabincell}[2]{\begin{tabular}{@{}#1@{}}#2\end{tabular}}
\centering
\caption{Differential asymmetric deadzone approximation model \cite{zuo:2016control}.}
\label{tb:paper13.differentialAsymmetricDeadzone}
\begin{tabular}{cc|l}
\toprule[2pt]
$k_r=k_l=k_{dz}$ & $b_r=b_l=b_{dz}$ & $\label{eq:paper13.operator.zuo2016_deadzone.simplified2}
\eta_{dead}(v) = k_{dz} v + \frac{k_{dz}}{2\rho_{dz}} \ln\frac{\cosh{\rho_{dz} (\theta_{dz} - b_{dz})}}{\cosh{-\rho_{dz} k_{dz} (v + b_{dz})}}$\\
$k_r=k_l=k_{dz}$ & $b_r\neq b_l$ & $\label{eq:paper13.operator.zuo2016_deadzone.simplified1}
\eta_{dead}(v) = k_{dz}\left(v + \frac{b_l-b_r}{2}\right) + \frac{k_{dz}}{2\rho_{dz}} \ln\frac{\cosh{\rho_{dz} (v - b_r)}}{\cosh{-\rho_{dz} k_{dz} (v + b_l)}}$ \\
$k_r\neq k_l$ & $b_r\neq b_l$ &	
$\label{eq:paper13.operator.zuo2016_deadzone.differential}\eta_{dead}(v) = \frac{1}{\rho_{dz}} \ln\frac{1+e^{\rho_{dz} k_r (v - b_r)}}{1+e^{-\rho_{dz} k_l (v + b_l)}}$
\\
\bottomrule[2pt] 
\end{tabular}
\end{table*}

\begin{figure}
\begin{center}
\label{fig:paper13.operator.zuo2016_deadzone}
\includegraphics[width=\linewidth]{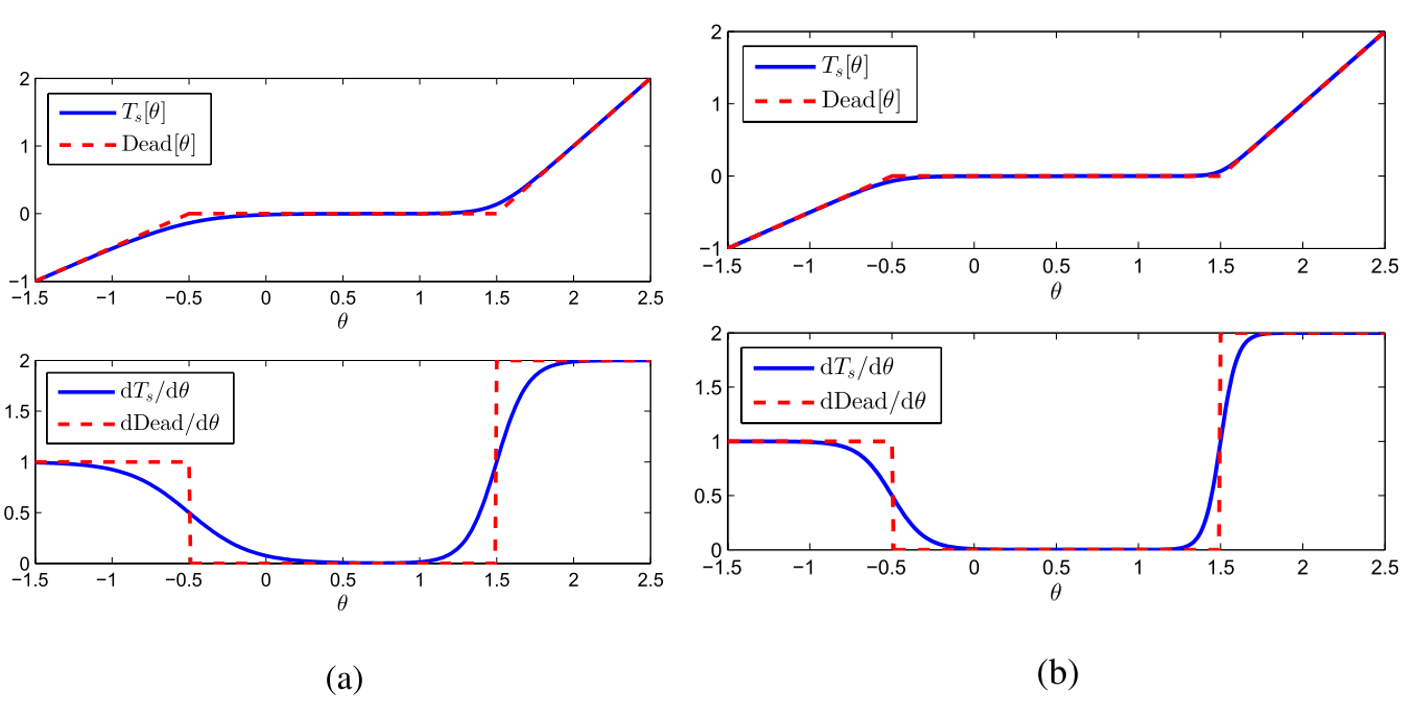}
\caption{Deadzone approximation with different soft degrees: $k_l = 1$, $k_r = 2$, $a_l = 0.5$, and $a_r = 1.5$. (a) $\rho_{dz} = 5$. (b) $\rho_{dz} = 10$. \cite{zuo:2016control}}
\end{center}
\end{figure}

In practical applications, the deadzone breakpoints are always unknown. Early adaptive solutions assume that the slopes and breakpoints are unknown parameters, namely, adaptive deadzone inverse. Adaptive laws are adopted to estimate parameters $\hat{k}_r$, $\hat{k}_l$, $\widehat{k_r b_r}$, and $\widehat{k_l b_l}$ \cite{recker:1991adaptive}. 
The deadzone is divided into separate smooth regions and the LFC is $V = V_n + \tilde{\theta}_r^\top \Gamma_r^{-1} \tilde{\theta}_r + \tilde{\theta}_l^\top \Gamma_l^{-1} \tilde{\theta}_l$, where the adaptive states are $\theta_r = [k_r,k_r b_r]^\top$ and $\theta_l=[k_l, k_l b_l]^\top$, and $\Gamma_r$ and $\Gamma_l$ are positive definite matrices. 

Alternatively, robust adaptive control approaches are proposed using Lemma~\ref{lemma:paper13.boundedV} \cite{wang:2004robust,ibrir:2007adaptive}.
When $k_l$, $k_r$, $b_l$ and $b_r$ are unknown, but stay within known ranges, the deadzone function can be separated into a linear term and a bounded disturbance,
\begin{equation}\label{eq:paper14.deadzone.linear.disturbance}
\text{Dead}(v) = k_{dz} v + d_{dead}(v)
\end{equation}
where 
$d_{dead}(v) = \begin{cases}
-k_r b_r, & \text{if } v \geq b_r, \\
-k_{dz} v,  & \text{if } -b_l < v < b_r, \\
-k_l b_l, & \text{if } v \leq b_l.
\end{cases}$ and
$k_{dz}=\begin{cases}
k_r& \text{if } v \geq b_r\\
k_l& \text{if } v \leq b_l
\end{cases}$. 
Based on the assumption that $k_{dz}\in[k_{\min},k_{\max}]$, $b_r\in[b_{r\min},b_{r\max}]$, and $b_l\in[b_{l\min},b_{l\max}]$, we have $|d_{dead}|\leq\max\{k_{\max} b_{r\max},\allowbreak-k_{\max} b_{l\max} \}$. Besides robust adaptive control, the disturbance $d_{dead}$ can be handled with estimation-based methods.

A more complicated case is the unknown nonlinear $g_r$ and $g_l$. 
When the deadzone width parameters $b_r$ and $b_l$ are bounded constants, $g_r$ and $g_l$ are smooth functions with bounded slopes, a separate model \cite{meng:2014adaptive} similar to (\ref{eq:paper14.deadzone.linear.disturbance}) is given by $\text{Dead}(v) = K^\top \Phi v + d_{dead}$ where the denotations of $K$, $\Phi$, and $d$ can be found in \cite{zhang:2008adaptive}. 
Since $d_{dead}$ is bounded, the deadzone is transferred to the problems relevant to bounded disturbance with unknown time-varying control coefficient. NN/FLS-based approximation approaches are used to estimate the deadzone dynamics with unknown $g_r$ and $g_l$, as well as unknown $b_r$ and $b_l$ \cite{selmic:2000deadzone}. The deadzone inverse is $v=\text{Dead}^{-1}(u) = u + u_{NN}$, where 
$$u_{NN} = \begin{cases}
g_l^{-1}(u), & \text{if } u<0\\
0, & \text{if } u=0 \\
g_r^{-1}(u), & \text{if } u>0
\end{cases}.$$

\subsubsection{Time-varying control coefficient}\label{sec.paper13.unknownCtrlCoeff}
If the control coefficients $g_i$ is an unknown nonzero constant with known sign, $V_i = \frac{1}{g_i}z_i$ can be used as the LFCs. In $\dot{V}_i$, the resulting term $\frac{1}{g_i} f_i$ is similar to a typical adaptive backstepping design, where $\theta_{gi}=\frac{1}{g_i}$ \cite{ge:2005robust,wang2020adaptive}. Singularity problem can be avoided with this modifications. 

A more challenging scenario is that $g_i$ is only partly known, e.g., $\dot{x}_i=f_i+(g_i + \Delta g_i) x_{i+1}$ where $\Delta g_i$ is the unknown control coefficients. The basic assumption is that the uncertain parts $\Delta g_i$ is bounded, i.e., there exist positive constants $\overline{g}_{i\Delta}$ s.t. $|\Delta g_i(\bar{x}_i)|\leq \overline{g}_{i\Delta}$ \cite{chen:2011adaptive}. The aim is to design virtual control laws in case of the singularity issues caused by the $g_i$ part. In addition, this assumption also ensures the sign of $g_i+\Delta g_i$ is unchanged. The virtual control law is modified to $ \alpha_1 = (g_1 + \gamma_1)^{-1} (-c_1 z_1 - f_1 + \dot{x}_{1d})$, where $\gamma_1>\overline{g}_1$ is selected to avoid the singularity problem of $(g_1 + \gamma_1 )^{-1}$. 

To avoid the singularity problem when $g_i(\bar{x}_i(t))$ is unknown, integral Lyapunov functionals can be used, such as
\begin{itemize}
\item $V_{1,I} = \int_{0}^{z_1} \sigma \beta_i(\sigma+x_{1d}) d\sigma$
\item $\dot{V}_{1,I} = z_1 \beta_1 \dot{z}_1+
\int_{0}^{z_1} \sigma \frac{\partial \beta_i(\sigma+x_{1d})}{\partial x_{1d}} \dot{x}_{1d} d\sigma$
\item $V_{i,I} = \int_{0}^{z_i} \sigma \beta_i(\bar{x}_{i-1},\sigma+\alpha_{i-1}) d\sigma, \ i=2,\cdots,n$
\end{itemize}
where $\sigma = \theta z_1$ and $\beta_i = \frac{\overline{g}_i}{|g_i|}$. The integral Lyapunov functionals is applicable to autonomous systems since $\frac{1}{2}z_i^2 \leq V_{i,I} \leq \frac{z_i^2}{\underline{g}_i} \int_{0}^{1} \theta \overline{g}(\theta z_i + \alpha_{i-1})$ \cite{ge:2004adaptive}.
The design continues using the Nussbaum function and NN/FLS.

For the cases, the control coefficients $g_i$ is non-positive or non-negative. 
For example, absolute is a quite commonly used operator in hydraulic systems. When the absolute operator appears in $g_i(x)$ in Eq.~(\ref{eq:paper13.strictFeedbackForm}), $g_i$ is no longer a first-order continuous function. Consequently, a singularity problem is aroused, i.e., $1/g_i \to \infty$ when $g \to 0$. Sign function is usually applied to remove the absolute operator, $|a|=sgn(a)a$. An approximation $sgn(a) \approxeq tanh(a/\delta)$ transfers $g(x)$ into a continuously Lipschitz function where $\delta$ is a small enough positive constant \cite{yao:1998nonlinear,zhou:2006adaptiveSaturation}. Another way to avoid $1/g_i$ a singularity problem is to use $g_{i,new} = \frac{g_i^2}{g_i+\varepsilon_i}$ or $g_{i,new} = g_i + \varepsilon_i$ during the control design, where $\delta$ is a small value.

A more general problem is that the sign of $g_i$ is unknown. Nussbaum function presented in Section~\ref{sec.paper13.NussbaumFunction} is widely used to handle such an issue.

\subsubsection{Actuator failure}
Actuator failures change the output and parameters, introduce additional system uncertainties and disturbances, and result in performance deterioration and even accidents. A failure is normally uncertain in time and often unrecoverable. Modern researches on fault-tolerant control are in two directions. (i) Firstly, the model-based redundancy approach for fault-tolerant control is based on a bank of residual signals generated by multiple online monitoring modules running in parallel with specific possible failures. If the failures are not contained in the bank, the performance is unreliable. (ii) Another direction is the adaptive failure compensation design without explicit failure detection, which has a simpler structure. The objective is to compensate for the effects of reasoning from the actuator failures, and meanwhile, to ensure the asymptotic tracking performance with a bounded error. The controller remains the same structure through the running; therefore, it is easier to implement. In this survey, we focus on (ii) adaptive failure compensation design.

Failures are categorized into the total loss of effectiveness (TLOE) and partial loss of effectiveness (PLOE). 
Suppose that there are in total $m$ actuators in the system, i.e., $u=[u_1,\cdots,u_m]^\top$
For a TLOE scenario, redundant actuators are required. A common assumption is that the remaining actuators are fully actuated, and the desired control objective is still achievable for up to $m-1$ actuator faults for an SISO system~\cite{tang:2003adaptive,li:2016adaptiveQuantization}. This assumption ensures the controllability of the plant with the remaining actuation power and the existence of a normal solution for the actuator failure compensation problem.

In literature, there are three types of actuator failure models. From simple to complex, they are listed as follows.
\begin{itemize}
\item  If failure for the $j^{th}$ actuator occurs at $t_j$, i.e.,
\begin{equation}
u_j(t)=\bar{u}_j,\ \forall t\geq t_j,\ j = 1,\cdots,m,
\end{equation}
where $\bar{u}_j$ and $t_j$ are unknown. The widely-used actuator fault model is
$u=\sigma \bar{u} + (I_m - \sigma) v$, where $\sigma = diag\{\sigma_1,\cdots,\sigma_m\}$ is a matrix containing failure patterns $\sigma_j$ and $\sigma_j=\begin{cases}
1 & \text{if the }j^{th}\text{ actuator fails},\\
0 & \text{otherwise}.\end{cases}$
Then $\sigma \bar{u}$ and $(I_m - \sigma)$ can be estimated using adaptive laws
\cite{tao:2001adaptive}.

\item Actuator model with both gain fault and bias fault is modeled by \cite{shen:2014adaptive,chen:2016adaptive}, i.e.,
\begin{equation}
u_j = \rho_j v_{j} + b_{uj}.
\end{equation}
The model has four modes as follows.
\begin{itemize}
\item Failure-free: $\rho_j = 1$ and $b_{uj}=0$;
\item PLOE: $\rho_j \in (0,1)$ and $b_{uj}=0$;
\item TLOE: $\rho_j = 0$ and $b_{uj} = 0$;
\item Bias fault: $\rho_j = 0$ and $b_{uj} \neq 0$.
\end{itemize}

\item First- \cite{boskovic2003robust}, second-\cite{boskovic2005adaptive}, and third-order \cite{boskovic:2010decentralized} dynamic actuator failure models are proposed. The first-order model is given by
\begin{equation}
\dot{u}_j = -(1-\sigma_j) \lambda_j (u_j - k_j v_j)
\end{equation} and the second-order model is given by
\begin{equation}
\begin{aligned}
\dot{u}_{1j} &= u_{2j}, \\
\dot{u}_{2j} &= -(\lambda_{2j}+\sigma_j\beta_j)u_{2j}+(1-\sigma_j)\lambda_{1j} (k_j v_j - u_{1j})
\end{aligned}
\end{equation}
where
$\lambda_{j}\gg 1$, $\lambda_{1j}\gg 1$, $\lambda_{1j}\gg\lambda_{2j}$, and $\lambda_{2j}+\beta_j \gg 1$. the three modes are
\begin{itemize}
\item Failure-free: $\sigma_j = 0$ and $k_j = 1$;
\item PLOE: $\sigma_j = 0$ and $k_j\in(0,1)$;
\item TLOE: $\sigma_j = 1$.
\end{itemize}
\end{itemize}

For a PLOE issue, the control coefficient can be considered as time-varying unknown parameters, which can be solved by NN-based design \cite{jiang:2014adaptive} and Nussbaum function \cite{shen:2017fault}. The fault detection and isolation mechanism can be integrated into the adaptive control to improve the performance, e.g., the deactivation function for failed actuators is realizable. NNs can be adopted to optimize the effects of the failure after fault detection but before the isolation \cite{zhang:2004adaptive}. For the system with healthy actuators for backups, an adaptive state-feedback fault-tolerant control function can be achieved by combining the BLF and a monitoring function \cite{ouyang:2017adaptive}.

To ensure the transient performance of the tracking error, prescribed performance bounds based control design is used \cite{wang:2010adaptive}. The tracking error should stay in a tube of a prescribed decreasing smooth prescribed performance function $\eta(t)$, i.e., $-\underline{\delta} \eta(t) < z_1(t) < \overline{\delta} \eta(t)$, where $0<\underline{\delta}<\overline{\delta}\leq 1$ are given positive constants. A new error state is defined as $z_1 = S^{-1}(\frac{z_1}{\eta})$, where the smooth, strictly increasing, and invertible function $S(z_1)$ satisfies: (i) $\underline{\delta} < S(z_1) < \overline{\delta}$, (ii) $\lim\limits_{z_1\to+\infty}S(z_1) = \overline{\delta}$ and $\lim\limits_{z_1\to-\infty}S(z_1) = \underline{\delta}$, and (iii) $S(0)=0$.
For example, functions $\eta(t)$ and $S$ are designed as $\eta(t)=[\eta(0)-\eta(\infty)]\exp(-at) + \eta(\infty)$ and $S(v)=\frac{\overline{\delta}e^{(v+r)}-\underline{\delta}e^{-(v+r)}}{e^{(v+r)} + e^{-(v+r)}}$, where $a,\eta(0),\eta(\infty)>0$, $\eta(0)>\eta(\infty)$, and $r=\ln(\underline{\delta}/\overline{\delta})/2$.

%



\subsection{Time-delay effects}
Widely existing in chemical systems, biological systems, economic systems, and hydraulic/pneumatic systems, the time-delay phenomenon usually deteriorates the performance of a closed-loop system. 
A system stabilized by a feedback control law may become unstable after introducing the time-delay effects. 
Time-delay effects probably exist everywhere in the system \eqref{eq:paper13.strictFeedbackForm}, e.g., $f_i$, $x_{i+1}$, $u$, and $y$. The delayed times are often unknown, which can be a constant value for all parameters, constant but different values for different parameters, and time-varying. 
The feedback laws have to be modified to adjust the time-delay effects. 
There exist two main types of Lyapunov functions for a time-delay system, i.e., Lyapunov-Razumikhin approach \cite{jankovic:2001control} and Lyapunov-Krasovskii approach \cite{malisoff:2009constructions}.  Most studies on adaptive backstepping adopt Lyapunov-Krasovskii functionals, and only a few studies rely on Razumikhin lemma \cite{hua:2008robust}. Lyapunov-Krasovskii approach is a predictor-like technique according to the Lyapunov-Krasovskii Theorem.


\subsubsection{State delay}
A strict-feedback system with state-delay is given by
\begin{subequations}
\begin{align}
\dot{x}_i(t) &= f_i(\bar{x}_i(t)) + f_{di}(\bar{x}_i(t-\tau_{di})) + g_i(\bar{x}_i(t))x_{i+1}(t), \\
\dot{x}_n(t) &= f_n(\bar{x}_n(t)) + f_{dn}(\bar{x}_n(t-\tau_{dn})) + g_n(\bar{x}_n(t)) u(t),
\end{align}
\end{subequations}
where 
$\tau_{di}>0$ denotes the delayed time, 
$f_{di}(\bar{x}_i(t-\tau_{di}))$ are the time-delay terms, and $\bar{x}_i(t-\tau_{di}) = [x_1(t-\tau_{d1}),x_2(t-\tau_{d2}),\cdots,x_i(t-\tau_{di})]^\top$.

A common assumption for a time-delay system is that the unknown time delays are bounded by a known constant, i.e., $\tau_{di}\leq \tau_{d,max}$. At the beginning, strict-feedback form with known equal delay $\tau_{d1}=\cdots=\tau_{dn}=\tau_d$ is handled by robustness-based approach by assuming all functions are known and bounded \cite{nguang:2000robust,ho:2005adaptive}. However, the delays in different states always vary in practical applications, i.e., $\tau_{d1}\neq \cdots \neq \tau_{dn}$ \cite{ge:2005robust,ge:2007approximation,chen:2009novel}. 

In addition, various assumptions of bounded parametric time-delayed terms are adopted to make $f_{di}$ to be compensable. The absolute value of the time-delay term is bounded by known smooth functions $\rho_i(\bar{x}_i)$ in several parameter-separation forms. Commonly used assumptions are listed as follows:
\begin{itemize}
\item $|f_{di}(\bar{x}_i(t-\tau_{di}))| \leq \sum\limits_{j=1}^{n}\rho_j(\bar{x}_j)$ \cite{ge:2007approximation},
\item $|f_{di}(\bar{x}_i(t-\tau_{di}))| \leq \rho_i(\bar{x}_i(t-\tau_{di}))$ \cite{ge:2004adaptive},
\item $|f_{di}(\bar{z}_i(t-\tau_{di}))|\leq \sum\limits_{j=1}^{i}|z_j(t-\tau_{dj})|\rho_{ij}(\bar{z}_j(t-\tau_{dj}))$ \cite{nguang:2000robust,ho:2005adaptive},
\item $|f_{di}(\bar{x}_i(t-\tau_{di}))| = \theta_{di}^\top \phi_{di}(\bar{x}_i(t-\tau_{di})) + \delta_{di}(\bar{x}_i(t-\tau_{di}))$ \cite{ge:2005robust,zhang:2015exact},
\end{itemize}
where $\rho_{ij}(\cdot)$ is a known continuous and smooth function, $\theta_{di}\in\mathbb{R}^{n_i}$, $\phi_{di}:\mathbb{R}^{i}\to\mathbb{R}^{n_i}$ is a known smooth function vector, $\delta_{di}$ is a bounded unknown smooth function, i.e., $|\delta_{di}(\bar{x}_i(t-\tau_{di}))| \leq c_{di}\rho_i(\bar{x}_i(t-\tau_{di}))$ where $c_{di}$ is an unknown constant.

An additional Lyapunov-Krasovskii functional is added to the LFC in the corresponding step. 
A basic integrate-type Lyapunov-Krasovskii functional and its derivative are 
\begin{subequations}
\begin{align}
V_{i,LK} =& \int_{t-\tau_d}^{t}S_i(\bar{z}_i(\sigma))d\sigma,\\
\dot{V}_{i,LK} =& S_i(t) - S_i(t-\tau_d),
\end{align}
\end{subequations}
where $S_i(\bar{z}_i(t))$ is a positive definite function, e.g., $S_i(\bar{z}_i(t)) = \rho_i^2(\bar{z}_i(t))$. The main idea of Lyapunov-Krasovskii functional is to cancel the delayed term with the selection of functions $S_i$. A virtual control law only compensates the terms without delay, and the delayed terms are canceled by a selection of function $S_i$. Depending on the number of time delayed components in the $i^{th}$ step, the general Lyapunov-Krasovskii functional is selected as $V_{i,LK} = \sum_{\mathcal{J}}\int_{t-\tau_{dj}}^{t}S_j(\bar{z}_j(\sigma))d\sigma$ when several delays $\tau_{dj}$ exist in the a same dynamics $\dot{x}_i$, with $\mathcal{J}$ is the set containing all indices of the delayed states~\cite{zhang:2015exact}. When multiple time delays exist in the states $x_1,x_2,\cdots,x_n$, a nonlinear decomposition strategy is adopted, e.g., $f_{di}(\bar{x}_i(t-\tau_{di})) \leq \sum_{j=1}^{i} \beta_{ij} (x_j(t-\tau_{di}))$. Then, the time derivative of $V_{i,QF}$ has a part $z_i f_{di}$, which can be separated by the Young's inequality, i.e., $z_i f_{di} \leq \frac{i}{2}z_i^2 + \frac{1}{2} \sum_{j=1}^{i} \beta_{ij}^2 (x_j(t-\tau_{dj}))$. Lyapunov-Krasovskii functionals are designed to compensate $\frac{1}{2} \sum_{j=1}^{i} \beta_{ij}^2 (x_j(t-\tau_{dj}))$ \cite{niu:2017adaptive}, such as $V_{i,LK} = \frac{1}{2}\int_{t-\tau_{di}}^{t} \sum_{j=1}^{i} \beta_{ij}^2 (x_j(\sigma)) d\sigma$. The effects of the additional Lyapunov-Krasovskii functional is an addition in the the control gain $c_i$.

Besides, the delayed function $f_{di}$ can be approximated and canceled using NNs/FLSs \cite{ge:2007approximation,chen:2009novel}. However, such approaches are limited when the nonlinearities change rapidly.

\subsubsection{Delayed integrator}
A system with delayed integrator is given by
\begin{subequations}
\begin{align}
\dot{x}_i &= f_i(\bar{x}_i) + g_i(\bar{x}_i)x_{i+1}(t-\tau_{di}), \\
\dot{x}_n &= f_n(\bar{x}_n) + g_n(\bar{x}_n) u.
\end{align}
\end{subequations}

A chain of delayed integrators (i.e., $f_i = 0$) \cite{zhou:2012stabilization} and linear system (i.e., $f_i = \sum\limits_{j=1}^{i} a_{ij} x_j$) \cite{bekiaris:2010stabilization} have been studied. However, the control for nonlinear systems with unknown delayed integrators is still open.

\subsubsection{Input delay}
There are several methods to handle a system with delayed input, which is given by
\begin{subequations}
\begin{align}
\dot{x}_i &= f_i(\bar{x}_i) + g_i(\bar{x}_i)x_{i+1}, \\
\dot{x}_n &= f_n(\bar{x}_n) + g_n(\bar{x}_n) u(t-\tau_d).
\end{align}
\end{subequations}

First, pade approximation approach can be used to handle time-varying input delay \cite{khanesar:2015adaptive,li:2017adaptive}. One additional state $x_{n+1}$ and a lowpass process is introduced to cancel the influence of the delayed time, i.e.,
\begin{subequations}
\begin{align}
\dot{x}_i &= f_i(\bar{x}_i) + g_i(\bar{x}_i)x_{i+1}, \\
\dot{x}_n &= f_n(\bar{x}_n) + g_n(\bar{x}_n) (x_{n+1}-u),\\
\dot{x}_{n+1} &= -\gamma_d x_{n+1} + 2\gamma_d u,
\end{align}
\end{subequations}
where $\gamma_d=\frac{2}{\tau_d}$. The last error state is defined as $z_n := x_n-\alpha_{n-1}+\frac{1}{\gamma_d}x_{n+1}$. When $\tau_d$ is unknown, $\gamma = \gamma_n + \tilde{\gamma}$, where $\gamma_n$ is assumed to be known and $\tilde{\gamma}$ is the bounded error. 
The Pade approximation approach is only capable to work in a short delay, but not a long delay. 

Moreover, emulation is another approach for the input delay with a short delay. The controller is first designed without delay, then the upper bound of the delay is found such that the closed-loop system is still asymptotically stable \cite{mazenc:2006backstepping}.

Besides, the unknown arbitrarily long actuator delay length can be handled by prediction-based boundary control \cite{krstic:2008backstepping} and predictor feedback \cite{bresch:2009adaptive}, particularly used on linear time-invariant, finite-dimensional, and completely controllable system. The effects of the delay is compensated with an integration over the delay period. 
The delay time is estimated by a time-delay identifier when using the prediction-based boundary control.
The main idea of the predictor feedback approach is to modeled the actuator time delay effects as a transport partial differential equation \cite{bresch:2012adaptive}. Assuming the distributed input $u(x,t) = U(t+\tau_d(x-1))$, the delayed input is replace by a real-time input $v$, s.t.,
\begin{subequations}
\begin{align}
\tau_d \frac{\partial u}{\partial t}(x,t)& = \frac{\partial u}{\partial x}(x,t),\  x\in[0,1),\\
u(1,t) &= U(t). 
\end{align}
\end{subequations}
However, the distributed terms may not always be easy to compute. Furthermore, this approach is not applicable to nonlinear system due to the inconvenience integration over the delay interval. 

An alternative way to solve a system with an arbitrarily long input delay is proposed in \cite{mazenc:2016new} without distributed terms. A time-delay system is transformed into another dynamics with a delayed system state as its input \cite{zhou:2009adaptive}.


\subsubsection{Measurement delay}
In additional to the aforementioned time-delayed systems, the measurement delay is a few less popular issue, i.e., $y=x_1(t-\tau_d)$, which can be transformed to a state-delay problem. The zero-order hold delayed measurement \cite{karafyllis:2012nonlinear} is overcome by predictor-type designs \cite{karafyllis:2013stabilization} and estimators based on FLS \cite{wang:2015adaptive}.

\subsection{Pure feedback system}
A limitation of backstepping approach is that it requires the system to be in a strict-feedback form. A milder state equation, namely nonlinear pure-feedback system, is given by
\begin{subequations}\label{eq:paper13.pureFeedback}
\begin{align}
\dot{x}_i &= f_i(\bar{x}_{i},x_{i+1}), \ i\in\mathcal{I}, \\
\dot{x}_n &= f_n(\bar{x}_n,u).
\end{align}
\end{subequations}
Comparing to the strict-feedback system in Eq.~(\ref{eq:paper13.strictFeedbackForm}), a pure-feedback system (\ref{eq:paper13.pureFeedback}) has no affine appearance of the variables to be used as virtual control. The basis of the following solutions is to receive affine variables used as virtual control.

\subsubsection{Taylor series expansion}
If the pure-feedback system has a strong relative degree and can be transformed into an integrator chain, 
Taylor series expansion can be adopted to design an observer and neural adaptive controller \cite{leu:2005observer}. The high-order terms are modeled as disturbances. 

\subsubsection{Mean value theorem}
Mean value theorem is the most widely used method. A pure-feedback system (\ref{eq:paper13.pureFeedback}) is transformed to a strict-feedback form with unknown control coefficient \cite{ge:2002adaptive}, i.e., 
\begin{subequations}
\begin{align}
f_i(\bar{x}_{i+1})&=f_i(\bar{x}_i,0)+\mu_i(\bar{x}_i,x_{i+1}^0)x_{i+1}, \\
f_n(x,u) &= f_n(\bar{x}_n,0) + \mu_n(\bar{x}_n,u^0)u,
\end{align}
\end{subequations}
where $\mu_i(\bar{x}_{i+1}) = \frac{\partial f_i(\bar{x}_{i+1})}{\partial x_{i+1}}$ and
$\mu_n(\bar{x}_n,u) = \frac{\partial f_n(x,u)}{\partial u}$ exist but are unknown,
$x_{i+1}^0$ is an unknown point between 0 and $x_{i+1}$, and 
$u^0$ is an unknown value between 0 and $u$. Hence, the pure-feedback systems are transferred into a strict-feedback system with unknown functions $f_i$ and control coefficients $\mu_i$. Refer to the above-mentioned methods, the deduction is based on assumptions with bounded control gains, i.e., $\underline{\mu}_i<\mu_i<\overline{\mu}_i$. The error dynamics is then transferred to 
\begin{equation}\label{eq:paper13.pureFeedback.meanValueTheorem}
\dot{z}_i  = f_i(\bar{x}_i,\alpha_i^*)+ g_{i\mu_i}(x_{i+1}-\alpha_i^*),
\end{equation}
where $g_{i\mu_1}=g_i (\bar{x}_i,\mu_i x_{i+1} + (1-\mu_i)\alpha_i^*)$, with $\mu_i\in(0,1)$,
$\alpha_i^*$ is the ideal control input such that $f_1(x_1,\alpha_1^*)-\dot{x}_{1d}+k_1z_1 = 0$. Assuming that $|g_{1\mu_1}|$ is bounded within a known constant, and $\alpha_1^*$ is a function of $x_1$ and $x_{1d}$. 

First, the system (\ref{eq:paper13.pureFeedback.meanValueTheorem}) can be solved using NNs to approximate $\alpha_i^*$ \cite{ge:2004adaptive}. Semiglobal stability can be proved when the NN stays in some compact sets which can be made arbitrarily large with a sufficiently large number of NN nodes. 
Alternatively, input-to-state stability (ISS) is adopted to divide the entire interconnected system into two subsystems with overall stability guaranteed by the small-gain theorem \cite{wang:2006iss}. It is proved that the first error dynamic system with state $z$, inputs $\text{col}(\tilde{W}_1,\tilde{W}_2,\cdots,\tilde{W}_n)$ and $\text{col}(\varepsilon_1,\varepsilon_2,\cdots,\varepsilon_n)$ is ISS. Furthermore, the NN weight estimator with state $\text{col}(\tilde{W}_1,\tilde{W}_2,\cdots,\tilde{W}_n)$, inputs $z$ and $\text{col}(\tilde{W}_1^*,\tilde{W}_2^*,\cdots,\tilde{W}_n^*)$ is ISS. Then the stability of the interconnected system is guaranteed.

Instead of approximation on all unknown, robustness-based methods can be developed by assumed bounded uncertainties, e.g., $f_i(\bar{x}_i,\alpha_i^*) \leq \sigma_i\rho_i(\bar{x}_i) \sum\limits_{j=1}^{i}|x_j|$. A high-gain idea is used with dynamic control gains $\dot{c}_i$, and the final stability is proved with Barbalat's lemma \cite{hou:2017adaptive}. However, a rapid oscillation has been noticed in the control input.
Another method with simplified control law is  the prescribed performance control \cite{bechlioulis:2014low}. The system is bounded by predefined trajectories, where the convergence rate and exhibiting maximum overshoot ensure preassigned performance. Without the attempts to cancel all nonlinearity, the control law is similar to that using BLF, i.e., $u=-c_n T_f(\frac{x_n - \alpha_{n-1}}{\rho_n(t)})$ and $\alpha_i = -c_i T_f(\frac{x_i - \alpha_{i-1}}{\rho_i(t)})$ where $\rho_i$ is a prescribed performance function and $T_f(a) = \ln \frac{1+a}{1-a}$ with $a\in(0,1)$. The control gains are enlarged near the performance boundary.

Singular perturbation theory is combined with backstepping design \cite{hovakimyan:2007dynamic,yoo:2012adaptive}. The control signal is sought as a solution to the fast dynamics and is shown to asymptotically stabilize the original non-affine system.

When using mean value theorem, the main drawbacks are the requirement of positive $\frac{\partial f_i(\bar{x_i},x_{i+1})}{\partial x_{i+1}}$ and the complexity for high-order system. DSC and minimal learning parameter techniques are employed to release the requirement and achieve neutral adaptive control to the system with nondifferential $f_i$ \cite{liu:2016adaptive}. To enhance the estimation performance, the increasing number of NN nodes results in a significant number of adjustable parameters. The online learning time can be very long. A possible solution to reduce the adjustable parameters is to estimate $|W_i^*|^2$ with a single state $\theta$. Another shortage is the circular construction problem, since the NN is used to approximate $u$ $\dot{u}$, and the desired virtual and practical controls $\alpha_i^*$ and $u^*$, simultaneously \cite{ge:2002adaptive}. 

\subsubsection{Brunovsky norm form}
System (\ref{eq:paper13.pureFeedback}) is transferred to an affine strict-feedback form in Brunovsky norm form by introduce low-pass process $\dot{u} = -a u +v$, where $a>0$ \cite{meng:2014adaptive}. Defining states $s_1 := x_1=b_1(x_1)$ and $s_i = \dot{s}_{i-1}=:b_i(\bar{x}_i)$, $i=2,\cdots,n+1$ yields, $\dot{s}_{i}  = b_{i+1}=  \sum\limits_{j=1}^{i}\frac{\partial b_j(\bar{x_j})}{\partial x_j} \dot{x}_j = \sum\limits_{j=1}^{i-1}\frac{\partial b_j(\bar{x}_j)}{\partial x_j} f_j(\bar{x}_{j-1},x_j) 
+ \frac{\partial b_{i}(\bar{x}_{i})}{\partial x_{i}} f_{i}(\bar{x}_{i-1},x_{i})$ and $\frac{\partial b_{i+1}(\bar{x}_{i+1})}{\partial x_{i+1}} = \frac{\partial b_i(\bar{x}_i)}{\partial x_i} \frac{\partial f_{i+1}(\bar{x}_{i},x_{i+1}) }{\partial x_{i+1}} = \prod_{j=1}^{i} g_j(\bar{x}_{j-1},x_j)$
where $g_i = \frac{\partial f_{i+1}(\bar{x}_{i},x_{i+1}) }{\partial x_{i+1}}$ and $g_n(\bar{x}_n,u) = -a\prod_{j=1}^{n} g_j(\bar{x}_{j-1},x_j)$ satisfy the bounded $g_i$ assumption. Consequently, $	\dot{s} = As + B [f_n(\bar{x}_n,u) + g_n(\bar{x}_n,u)v]$, where $s=[s_1,\cdots,s_{n+1}]^\top$, $A= \begin{bmatrix}
0_{n\times 1} &  I_n \\
0 & 0_{1 \times n}
\end{bmatrix}$, $B=[0_{1\times n },1]^\top$, and $C=[1,0_{1\times n }]$. The final control laws have simpler forms if the linear systems are used.

\subsection{Event-triggered systems}
In typical backstepping designs, the continuous control is adopted in the design process and the system updates at discrete-time triggered instants $t_k$, namely, time-triggered system. 
In contrast, the event-triggered control, or event-based system, denotes a system updates aperiodically with the pre-designed event-triggered condition. 
The execution of control tasks update after the occurrence of an event, and the control input (or measurements) is held between two consecutive updates (zero-order hold).
Normally, the monotonically increasing subsequence $t_k$ starts from $t_0 = 0$.
It is a more natural sampling way that of a human controller.
The strategy takes advantage of reducing network traffic loads and improving resource utilization with minor control performance degradation. The event-triggered problem can be categorized according to which part is event-triggered. 

(i) A major research focuses on the event-triggered control input. We assume constant control input between two triggered instants, i.e., $u(t)=v(t_k)$ for all $t\in[t_k,t_{k+1})$. The design procedure is the same as the time-triggered execution, except the last step. 

The event-triggered conditions are designed as the specific error $e_{et}:= u(t)-v(t_k)$ is larger than a preset threshold function. 
The thresholds can be fixed and state-dependent; listed as follows.
\begin{itemize}
\item Fixed threshold strategy \cite{xing2016event,wang2020adaptive}: 
\begin{equation}\label{eq:ETC-threshold_1}
t_{k+1} = \inf\{ t>t_k | |e_{et}(t)|> \delta_{et}\}
\end{equation}

\item State-dependent threshold 1 \cite{tabuada2007event}:
\begin{equation}\label{eq:ETC-threshold_2}
t_{k+1} = \inf\{ t>t_k | |e_{et}(t)|> \kappa_{et} |u(t)| \}
\end{equation}

\item State-dependent threshold 2 \cite{zhang2018event}:
\begin{equation}\label{eq:ETC-threshold_3}
t_{k+1} = \inf\{ t>t_k | |e_{et}(t)|> \kappa_{et} |u(t)| + \delta_{et}\}
\end{equation}
\end{itemize}
where $\kappa_{et}\in (0,1)$ and $\delta_{et}>0$.

Removing the absolute operators and substituting them into $\dot{V}_n$, $v(t)$ can be designed to be a function of the last virtual control $\alpha_i$.
For example, the control input under event-triggered condition \eqref{eq:ETC-threshold_3} becomes $u(t)=\frac{v(t)}{1+\eta_{et,1}(t)\kappa_{et}}-\frac{\eta_{et,2}(t)\delta_{et}}{1+\eta_{et,1}(t)\kappa_{et}}$ where $|\eta_{et,1}|\leq 1$ and $|\eta_{et,2}|\leq 1$. Therefore, the an unknown bounded external disturbance $-\frac{\eta_2(t)\delta_{et}}{1+\eta_1(t)\kappa_{et}}$ is introduced to state equation $\dot{x}_n=f_n(\bar{x}_n) + g_n(\bar{x}_n) u$. The problem can be easily solved by the $\tanh(\cdot)$ function \cite{zhang2018event}.

(ii) Alternatively, if the state-measurements are event-triggered, namely event-sampled, the measurements are considered as a jump, i.e., $\hat{x}_i(t) = x_i(t_k)$, for all $t\in[t_k,t_{k+1})$. 
The error in the event-triggered conditions then becomes $e_{et,i}(t) := x_i-\hat{x}_i$ for event-triggered sampling. 
The jump can be estimated with the newly-defined errors $e_{z_i}=z_i-\hat{z}_i$ \cite{li2017model}. 
To solve the problem by observer-based approach, the weight updates of the NN and FLS between the trigger instants are zero, i.e., $\dot{\hat{W}}_i = 0$, for all $t\in(t_k,t_{k+1})$ \cite{cao2018adaptive}.

The nonlinearity introduced by the event-triggered mechanism is difficult to be fully canceled; hence, only boundness is guaranteed in most studies. The control should satisfy ISS conditions which are difficult to check concerning the error. 
The high control input $u$ results in longer sampling intervals and degrades the control performance.
To improve the control precisely, a switching threshold strategy is proposed based on the amplitude of tracking error \cite{xing2016event}.
The tracking error converges to a domain. Exponential convergence and finite-time stability are studied in \cite{wang2018exponential}.

Though the bandwidth is reduced for the input and state measurements, continuously monitoring is required by the trigger mechanism, and it is impossible for digital platforms. To further release the hardware resources, periodic event-triggered control at fixed equidistant time instants is proposed, in which the event-triggered function is evaluated periodically. Borgers et. al.\cite{borgers2018periodic} redesigns the event function for an available continuous ETC using convex overapproximation techniques.

\subsection{Stochastic system}
A practical system is normally influenced by the external environments; hence, stochastic nonlinear systems arouse broad interests. According to It\^o lemma, stochastic different equations are adopted to present stochastic processes, such as,
\begin{equation} \label{eq:paper13.stochasticNonlinearSysm}
d x = f(x,u)dt+h_w(x,u)dw(t),
\end{equation}
where $w\in\mathbb{R}^r$ is r-dimensional independent standard Wiener process,
$f(x,u):\mathbb{R}^n\times\mathbb{R}^m \to \mathbb{R}^n$ and $h_w(x,u):\mathbb{R}^n\times\mathbb{R}^m \to \mathbb{R}^{n\times r}$ are locally Lipschitz for all $t\geq 0$ satisfying $f(0,0)=0$ and $h_w(0,0)=0$. 

Different from the disturbance $d_i$ in system \eqref{eq:paper13.backstepping.boundedDisturbance.chen2011_adaptive.system} whose amplitude is assumed to be bounded, the magnitude of the disturbance in a stochastic system can be arbitrarily large in sufficiently long period. Hence, stabilities and properties are defined in probability. Unlike deterministic systems, it is impossible to prove the boundness. The equilibrium point $x(0)=0$ of (\ref{eq:paper13.stochasticNonlinearSysm}) is said to be 
\begin{enumerate}
\item Stable in probability if, for every $\varepsilon>0$ and $\delta>0$, there exists an $r$ s.t. if $t>t_0$, $|x_0|<r$ and $i_0\in S$, then $P\{|x(t)|>\varepsilon\}<\delta$; 
\item Asymptotically stable in probability if it is stable in probability and, for each $\varepsilon>0$, $x\in\mathbb{R}^{n}$ and $i_0 \in S$, there is $\lim\limits_{t\to\infty}P\{|x(t)|>\varepsilon\}=0$;
\item Bounded in probability if the random variable $|x(t)|$ are bounded in probability uniformly in $t$, i.e., $\lim\limits_{t\to\infty} \sup\limits_{t>t_0} P\{|x(t)|>R\}=0$ \cite{wu:2009backstepping}. 
\end{enumerate}

The error state is defined as $dz_1 = dx_1 - d x_{1d} =  dx_1 -  \dot{x}_{1d} dt$.
To prove the stability of a stochastic system, an infinitesimal generator for a stochastic system is adopted, given by
\begin{equation}
\mathcal{L}V(x) = \frac{\partial V}{\partial t} + \frac{\partial V}{\partial x} f + \frac{1}{2}\text{Tr}[h_w^\top\frac{\partial^2 V}{\partial x^2} h_w],
\end{equation}
where the higher-order Hessian term $\frac{1}{2}\text{Tr}[h_w^\top\frac{\partial^2 V}{\partial x^2} h_w]$ is due to the stochastic noises and $\text{Tr}(\cdot)$ is the trace operator. To adjust the newly-involved term $\frac{\partial^2 V}{\partial x^2}$, quartic Lyapunov functions are always chosen, e.g., $\frac{1}{4}z_i^4$ and $\frac{1}{4} \log \left(\frac{k_{bi}^4}{k_{bi}^4-z_i^4}\right)$. While the functions accounting for the errors of parameter estimates are still quadratic, e.g., $\frac{1}{2} \tilde{\theta}_i^\top \tilde{\theta}_i$. Consequently, $z_1^2$ appears in $\mathcal{L}V$. Applying Young's inequality generates a term with $z_1^4$ which can be compensated by the virtual control law $\alpha_i$. 

Suppose a positive definite, radically unbounded, twice continuously differentiable function $V$ exists s.t. $\mathcal{L}V(x)$ is negative definite, then the equilibrium $x=0$ is globally asymptotically stable in probability \cite{khasminskii:2011stochastic}. Additionally, the solution of the system, similar to Lemma~\ref{lemma:paper13.boundedV}, is bounded in probability suppose there exist two constants $\gamma>0$ and $\delta>0$ such that $\mathcal{L}V(x) \leq -\gamma V(x,t) + \delta$, for all $x\in\mathbb{R}^n$ and $t>t_0$ and $E[V(x,t)] \leq V(x_0) e^{-\gamma t} + \frac{\delta}{\gamma}$ \cite{psillakis:2007nn}. 
If Nussbaum-type function is adopted, $\mathcal{L}V(x) \leq -\gamma V(x,t) + \frac{1}{c} (g_n \mu_1 \mathcal{N}(\chi) + 1)\dot{\chi} \delta$ results in $E[V(x,t)] \leq V(x_0) e^{-\gamma t} + \frac{\delta}{\gamma} +  \frac{\sigma}{c}$ where $\sigma = \sup\int_{0}^{t}|E(g_n \mu_1 \mathcal{N}(\chi) + 1)\dot{\chi} e^{\gamma \tau}|d\tau$ \cite{wang:2016adaptiveHysteresis}. The finite-time stability is guaranteed by $\mathcal{L}V+c V^\alpha(x) \leq 0$ \cite{yin:2011finite}.

\section{Discussion and conclusion}
Backstepping design approaches have been significantly developed since the first appearance. Numerous modifications and innovations are proposed to achieve recursive Lyapunov-based controller design to increasingly complex and uncertain systems. This survey reviews the state-of-the-art development of several widely applied elegant methods and adaptive control using backstepping-like approaches to various nonlinearities are presented. The complex nonlinear systems are transferred into the integration of a number of simple problems, and each sub-problem is solved by the corresponding assumptions and approaches. To sum up, backstepping, together with its extensions, is a systematic and modularized control design approach.


Cancellation is the keyword in the backstepping-like methodology. The nonlinearities result in weak convergence, undesired inaccuracy and oscillation, and even instability. To cancel the uncertain and unknown parts, each method relays on some specific assumptions on boundness and parametric separation. Increasingly restrictive assumptions are needed for higher nonlinearities. The Lyapunov function for each step can be a sum of several Lyapunov functions for specific complexities, including tracking errors, approximation errors, smoothness errors, etc. When the appropriate assumptions and Lyapunov functions are invoked, the virtual control laws, adaptive update laws, and final control input can be designed to compensate the state-relevant terms in the time derivative of the LFC. 

In practical applications, there are many systems other than the strict-feedback form. When the simplifications are not satisfied, nonlinearities and complexities appear. 
Since the system uncertainties and complexities are not perfectly known, the Lyapunov stability criteria in a form of Lemma~\ref{lemma:paper13.boundedV} play important roles in the recursive design procedures and stability proofs. 
Noncancelable positive parts are moved to $\delta$ in \eqref{eq:paper13.boundedVLemma.V} and $\gamma$ is chosen as the minimum value of groups of control and update gains. 
As to exactly cancel all unknown terms is impossible, boundness is always guaranteed rather than the asymptotic or exponential stabilities at the equilibrium point.
With growing complexities and uncertainties, the magnitude of $\delta$ in \eqref{eq:paper13.boundedVLemma.V} increases with a sum of positive components. The control gains have to be sufficiently large to ensure a satisfying tracking performance. However, the real value of $\delta$ is probably much less than $\delta$ we received, resulting in a conservative design and a waste of actuator capability.

Most developments belong to two foremost approaches, i.e., robustness-based approach and estimation-based approach. 
\begin{itemize}
	\item The main idea of the robustness-based method is to largely compensate all nonlinearities by some parameter-separation assumptions, combing with $\dot{V}=-\gamma V+\delta$ in Lemma~\ref{lemma:paper13.boundedV}. The positive terms, which cannot be canceled but bounded, are accumulated in the final $\delta$. Then, the effects of system uncertainties can be reduced with high control gains and adaptive coefficients. However, such aggressive control would require expensive hardware, and cause instability. 
	Hence, the value of $\gamma$ must be enhanced to achieve similar performance. Consequently, the control and adjust laws are growingly aggressive with the increasing system dimension $n$. To enhance the adaption speed, large gains in the adaptive update laws may result in instabilities.
	\item The approximation-based approach employs estimators based on the existed measurements. All nonlinearities are assumed to be approximated. Increasing the order of the estimator is the solution to increasingly complex nonlinearities. Suppose that the unknown parts are well estimated, the nonlinearities can be canceled. However, exosystem, NN, and FLS all fail to estimate the high-frequency uncertainties. Though the NN/FLS-based approaches have high flexibility to approximate all sorts of bounded smooth functions, the learning process is time-consuming. 
\end{itemize}

The primary drawback of the backstepping approach is the complexity of the controller design and stability analysis, which is caused by increasing the system dimension and repeated differentiation of the virtual control laws. 
Overparameterization is still a very important problem. The case studies in most works are concentrated on the system with a degree equal and less than three. Estimation-based approaches, such as NN, and adaptive control laws introduce significant amounts of states to the entire control system, consequently, influencing the computation speed, learning time, and reaction speed. Attempts have been made to overcome the overparameterization problem during backstepping design procedures, e.g., tuning function, dynamic surface control, and command filtered backstepping. 

The unknown terms are assumed to be either bounded or a product of unknown constant vectors and known functions. Systematic methods to find $\phi_i (\bar{x}_i)$ is still challenging. Though the appearance of $\phi_i (\bar{x}_i)$ can be received with careful tuning, it is difficult to ensure the proposed functions fit all the uncertainties. Additionally, the design procedures would be cumbersome and the computational time boosts if too many unnecessary components invoked. Approaches with lower gains in control laws and adjusting laws will be valuable.

Another phenomenon of studies on backstepping is the combination-based innovations. 
Recalling the memory of cocktail and mixology, the recipe gets a new name with various mixtures of ingredients. However, they are still beverages without distinction in the cooking methods. This is similar to today's publications on backstepping design. Some elegant methods are developed. Notwithstanding, most studies are just never-appeared-before combinations of several complexities using an integration of existing approaches that handle the corresponding nonlinearities. On the other hand, this is a preferred feature by the control designers, which makes the design modularized.

Backstepping has shown a great promise in numerous applications since initial development. Novel methods have extended its applications to increasingly complex systems. We believe that it will be applied to new future challenges by the involvement of theoretical innovations in relevant realms.

\bibliographystyle{elsarticle-num}
\bibliography{TK8111}

\begin{thebibliography}{100}
\expandafter\ifx\csname url\endcsname\relax
  \def\url#1{\texttt{#1}}\fi
\expandafter\ifx\csname urlprefix\endcsname\relax\def\urlprefix{URL }\fi
\expandafter\ifx\csname href\endcsname\relax
  \def\href#1#2{#2} \def\path#1{#1}\fi

\bibitem{kanellakopoulos:1991systematic}
I.~Kanellakopoulos, P.~V. Kokotovi{\'c}, A.~S. Morse, Systematic design of
  adaptive controllers for feedback linearizable systems, in: American Control
  Conference, 1991, IEEE, 1991, pp. 649--654.

\bibitem{kokotovic:1992joy}
P.~V. Kokotovi{\'c}, The joy of feedback: nonlinear and adaptive, IEEE Control
  systems 12~(3) (1992) 7--17.

\bibitem{krstic:1995nonlinear}
M.~Krstic, I.~Kanellakopoulos, P.~V. Kokotovi{\'c}, et~al., Nonlinear and
  adaptive control design, Vol. 222, Wiley New York, 1995.

\bibitem{kokotovic:2001constructive}
P.~Kokotovi{\'c}, M.~Arcak, Constructive nonlinear control: a historical
  perspective, Automatica 37~(5) (2001) 637--662.

\bibitem{khalil:1996noninear}
H.~K. Khalil, Noninear systems, Vol.~2, 1996.

\bibitem{ioannou:1996robust}
P.~A. Ioannou, J.~Sun, Robust adaptive control, Vol.~1, PTR Prentice-Hall Upper
  Saddle River, NJ, 1996.

\bibitem{tao:1996adaptive}
G.~Tao, P.~Kokotovi{\'c}, Adaptive control of systems with actuator and sensor
  nonlinearities, Adaptive and learning systems for signal processing,
  communications, and control, Wiley, 1996.

\bibitem{tao:2013adaptive}
G.~Tao, F.~L. Lewis, Adaptive control of nonsmooth dynamic systems, Springer
  Science \& Business Media, 2013.

\bibitem{krstic:2008boundary}
M.~Krstic, A.~Smyshlyaev, Boundary control of PDEs: A course on backstepping
  designs, Vol.~16, Siam, 2008.

\bibitem{ge:2013stable}
S.~S. Ge, C.~C. Hang, T.~H. Lee, T.~Zhang, Stable adaptive neural network
  control, Vol.~13, Springer Science \& Business Media, 2013.

\bibitem{shen:2017fault}
Q.~Shen, B.~Jiang, P.~Shi, Fault diagnosis and fault-tolerant control based on
  adaptive control approach, Vol.~91, Springer, 2017.

\bibitem{Marino1993output}
R.~Marino, P.~Tomei, Global adaptive output-feedback control of nonlinear
  systems. i. linear parameterization, IEEE Transactions on Automatic Control
  38~(1) (1993) 17--32.
\newblock \href {https://doi.org/10.1109/9.186309}
  {\path{doi:10.1109/9.186309}}.

\bibitem{FREEMAN1996735}
R.~Freeman, P.~Kokotovi{\'c}, Tracking controllers for systems linear in the
  unmeasured states, Automatica 32~(5) (1996) 735--746.
\newblock \href {https://doi.org/https://doi.org/10.1016/0005-1098(95)00193-X}
  {\path{doi:https://doi.org/10.1016/0005-1098(95)00193-X}}.

\bibitem{skjetne:2004robust}
R.~Skjetne, T.~I. Fossen, P.~V. Kokotovi{\'c}, Robust output maneuvering for a
  class of nonlinear systems, Automatica 40~(3) (2004) 373--383.

\bibitem{Marino1993global}
R.~Marino, P.~Tomei, Global adaptive output-feedback control of nonlinear
  systems. i. linear parameterization, IEEE Transactions on Automatic Control
  38~(1) (1993) 17--32.
\newblock \href {https://doi.org/10.1109/9.186309}
  {\path{doi:10.1109/9.186309}}.

\bibitem{arcak:2000robustification}
M.~Arcak, M.~Seron, J.~Braslavsky, P.~Kokotovi{\'c}, Robustification of
  backstepping against input unmodeled dynamics, IEEE Transactions on automatic
  control 45~(7) (2000) 1358--1363.

\bibitem{freeman1998robustness}
R.~FREEMAN, M.~KRSTIĆ, P.~KOKOTOVIĆ, Robustness of adaptive nonlinear control
  to bounded uncertainties, Automatica 34~(10) (1998) 1227--1230.

\bibitem{jiang1994Small}
Z.~P. Jiang, A.~R. Teel, L.~Praly, Small-gain theorem for iss systems and
  applications, Mathematics of Control Signals and Systems 7~(2) (1994)
  95--120.

\bibitem{krstic:1992adaptive}
M.~Krsti{\'c}, I.~Kanellakopoulos, P.~Kokotovi{\'c}, Adaptive nonlinear control
  without overparametrization, Systems \& Control Letters 19~(3) (1992)
  177--185.

\bibitem{gayaka:2012global}
S.~Gayaka, L.~Lu, B.~Yao, Global stabilization of a chain of integrators with
  input saturation and disturbances: A new approach, Automatica 48~(7) (2012)
  1389--1396.

\bibitem{swaroop:1997dynamic}
D.~Swaroop, J.~Gerdes, P.~P. Yip, J.~K. Hedrick, Dynamic surface control of
  nonlinear systems, in: American Control Conference, 1997. Proceedings of the
  1997, Vol.~5, IEEE, 1997, pp. 3028--3034.

\bibitem{farrell:2009command}
J.~A. Farrell, M.~Polycarpou, M.~Sharma, W.~Dong, Command filtered
  backstepping, IEEE Transactions on Automatic Control 54~(6) (2009)
  1391--1395.

\bibitem{yu:2015observer}
J.~Yu, P.~Shi, W.~Dong, H.~Yu, Observer and command-filter-based adaptive fuzzy
  output feedback control of uncertain nonlinear systems., IEEE Transactions on
  Industrial Electronics 62~(9) (2015) 5962--5970.

\bibitem{sorensen2020comparing}
M.~E.~N. S{\o}rensen, M.~Breivik, R.~Skjetne, Comparing combinations of linear
  and nonlinear feedback terms for ship motion control, IEEE Access 8 (2020)
  193813--193826.

\bibitem{swaroop:2000dynamic}
D.~Swaroop, J.~K. Hedrick, P.~P. Yip, J.~C. Gerdes, Dynamic surface control for
  a class of nonlinear systems, IEEE transactions on automatic control 45~(10)
  (2000) 1893--1899.

\bibitem{yu:2018finite}
J.~Yu, P.~Shi, L.~Zhao, Finite-time command filtered backstepping control for a
  class of nonlinear systems, Automatica 92 (2018) 173--180.

\bibitem{CORON1991adding}
J.-M. Coron, L.~Praly, Adding an integrator for the stabilization problem,
  Systems \& Control Letters 17~(2) (1991) 89--104.

\bibitem{lin:2000adding}
W.~Lin, C.~Qian, Adding one power integrator: a tool for global stabilization
  of high-order lower-triangular systems, Systems \& Control Letters 39~(5)
  (2000) 339--351.

\bibitem{hong:2001output}
Y.~Hong, J.~Huang, Y.~Xu, On an output feedback finite-time stabilization
  problem, IEEE Transactions on Automatic Control 46~(2) (2001) 305--309.

\bibitem{yu:2005continuous}
S.~Yu, X.~Yu, B.~Shirinzadeh, Z.~Man, Continuous finite-time control for
  robotic manipulators with terminal sliding mode, Automatica 41~(11) (2005)
  1957--1964.

\bibitem{huang:2005global}
X.~Huang, W.~Lin, B.~Yang, Global finite-time stabilization of a class of
  uncertain nonlinear systems, Automatica 41~(5) (2005) 881--888.

\bibitem{bhat:2000finite}
S.~P. Bhat, D.~S. Bernstein, Finite-time stability of continuous autonomous
  systems, SIAM Journal on Control and Optimization 38~(3) (2000) 751--766.

\bibitem{haddad:2008finite}
W.~M. Haddad, S.~G. Nersesov, L.~Du, Finite-time stability for time-varying
  nonlinear dynamical systems, in: American Control Conference, 2008, IEEE,
  2008, pp. 4135--4139.

\bibitem{sun2017new}
Z.-Y. Sun, M.-M. Yun, T.~Li, A new approach to fast global finite-time
  stabilization of high-order nonlinear system, Automatica 81 (2017) 455--463.

\bibitem{khoo:2013finite}
S.~Khoo, J.~Yin, Z.~Man, X.~Yu, Finite-time stabilization of stochastic
  nonlinear systems in strict-feedback form, Automatica 49~(5) (2013)
  1403--1410.

\bibitem{wang:2015finite}
H.~Wang, Q.~Zhu, Finite-time stabilization of high-order stochastic nonlinear
  systems in strict-feedback form, Automatica 54 (2015) 284--291.

\bibitem{ren2020finite}
Z.~Ren, B.~Zhao, D.~T. Nguyen, Finite-time backstepping of a nonlinear system
  in strict-feedback form: Proved by bernoulli inequality, IEEE Access 8 (2020)
  47768--47775.

\bibitem{hong:2006adaptive}
Y.~Hong, J.~Wang, D.~Cheng, Adaptive finite-time control of nonlinear systems
  with parametric uncertainty, IEEE Transactions on Automatic control 51~(5)
  (2006) 858--862.

\bibitem{hong:2006finite}
Y.~Hong, Z.-P. Jiang, Finite-time stabilization of nonlinear systems with
  parametric and dynamic uncertainties, IEEE Transactions on Automatic Control
  51~(12) (2006) 1950--1956.

\bibitem{zhang:2012finite}
X.~Zhang, G.~Feng, Y.~Sun, Finite-time stabilization by state feedback control
  for a class of time-varying nonlinear systems, Automatica 48~(3) (2012)
  499--504.

\bibitem{qian:2001continuous}
C.~Qian, W.~Lin, A continuous feedback approach to global strong stabilization
  of nonlinear systems, IEEE Transactions on Automatic Control 46~(7) (2001)
  1061--1079.

\bibitem{du:2013recursive}
H.~Du, C.~Qian, S.~Yang, S.~Li, Recursive design of finite-time convergent
  observers for a class of time-varying nonlinear systems, Automatica 49~(2)
  (2013) 601--609.

\bibitem{ge2002NN}
S.~Ge, C.~Wang, Direct adaptive nn control of a class of nonlinear systems,
  IEEE Transactions on Neural Networks 13~(1) (2002) 214--221.
\newblock \href {https://doi.org/10.1109/72.977306}
  {\path{doi:10.1109/72.977306}}.

\bibitem{zhang:1999design}
T.~Zhang, S.~Ge, C.~Hang, Design and performance analysis of a direct adaptive
  controller for nonlinear systems, Automatica 35~(11) (1999) 1809 -- 1817.

\bibitem{zhang2000adaptive}
T.~Zhang, S.~S. Ge, C.~C. Hang, Adaptive neural network control for
  strict-feedback nonlinear systems using backstepping design, Automatica
  36~(12) (2000) 1835--1846.

\bibitem{ho:2005adaptive}
D.~W. Ho, J.~Li, Y.~Niu, Adaptive neural control for a class of nonlinearly
  parametric time-delay systems, Neural Networks, IEEE Transactions on 16~(3)
  (2005) 625--635.

\bibitem{lewis:1999deadzone}
F.~L. Lewis, W.~K. Tim, L.-Z. Wang, Z.~Li, Deadzone compensation in motion
  control systems using adaptive fuzzy logic control, IEEE Transactions on
  Control Systems Technology 7~(6) (1999) 731--742.

\bibitem{chen:2010adaptivePeriodicDistrb}
W.~Chen, L.~Jiao, R.~Li, J.~Li, Adaptive backstepping fuzzy control for
  nonlinearly parameterized systems with periodic disturbances, IEEE
  Transactions on Fuzzy Systems 18~(4) (2010) 674--685.

\bibitem{yang:2004combined}
Y.~Yang, G.~Feng, J.~Ren, A combined backstepping and small-gain approach to
  robust adaptive fuzzy control for strict-feedback nonlinear systems, IEEE
  Transactions on Systems, Man, and Cybernetics-Part A: Systems and Humans
  34~(3) (2004) 406--420.

\bibitem{nussbaum:1983some}
R.~D. Nussbaum, Some remarks on a conjecture in parameter adaptive control,
  Systems \& Control Letters 3~(5) (1983) 243--246.

\bibitem{ye:1998adaptive}
Y.~Xudong, J.~Jingping, Adaptive nonlinear design without a priori knowledge of
  control directions, IEEE Transactions on Automatic Control 43~(11) (1998)
  1617--1621.

\bibitem{chen:2016saturated}
C.~Chen, Z.~Liu, Y.~Zhang, C.~P. Chen, S.~Xie, Saturated nussbaum function
  based approach for robotic systems with unknown actuator dynamics, IEEE
  transactions on cybernetics 46~(10) (2016) 2311--2322.

\bibitem{chen:2015adaptive}
C.~Chen, Z.~Liu, Y.~Zhang, C.~P. Chen, Adaptive control of robotic systems with
  unknown actuator nonlinearities and control directions, Nonlinear Dynamics
  81~(3) (2015) 1289--1300.

\bibitem{chen:2014adaptive}
W.~Chen, X.~Li, W.~Ren, C.~Wen, Adaptive consensus of multi-agent systems with
  unknown identical control directions based on a novel nussbaum-type function,
  IEEE Transactions on Automatic Control 59~(7) (2014) 1887--1892.

\bibitem{ding:2015adaptive}
Z.~Ding, Adaptive consensus output regulation of a class of nonlinear systems
  with unknown high-frequency gain, Automatica 51 (2015) 348--355.

\bibitem{ge:2003robust}
S.~S. Ge, J.~Wang, Robust adaptive tracking for time-varying uncertain
  nonlinear systems with unknown control coefficients, IEEE Transactions on
  Automatic Control 48~(8) (2003) 1463--1469.

\bibitem{ngo:2005integrator}
K.~B. Ngo, R.~Mahony, Z.-P. Jiang, Integrator backstepping using barrier
  functions for systems with multiple state constraints, in: Decision and
  Control, 2005 and 2005 European Control Conference. CDC-ECC'05. 44th IEEE
  Conference on, IEEE, 2005, pp. 8306--8312.

\bibitem{tee:2009barrier}
K.~P. Tee, S.~S. Ge, E.~H. Tay, Barrier lyapunov functions for the control of
  output-constrained nonlinear systems, Automatica 45~(4) (2009) 918--927.

\bibitem{xu:2013state}
J.-X. Xu, X.~Jin, State-constrained iterative learning control for a class of
  mimo systems, IEEE Transactions on Automatic Control 58~(5) (2013)
  1322--1327.

\bibitem{chen:2017adaptive}
Z.~Chen, Z.~Li, C.~P. Chen, Adaptive neural control of uncertain mimo nonlinear
  systems with state and input constraints, IEEE transactions on neural
  networks and learning systems 28~(6) (2017) 1318--1330.

\bibitem{ren:2010adaptive}
B.~Ren, S.~S. Ge, K.~P. Tee, T.~H. Lee, Adaptive neural control for output
  feedback nonlinear systems using a barrier lyapunov function, IEEE
  Transactions on Neural Networks 21~(8) (2010) 1339--1345.

\bibitem{liu:2016barrier}
Y.-J. Liu, S.~Tong, Barrier lyapunov functions-based adaptive control for a
  class of nonlinear pure-feedback systems with full state constraints,
  Automatica 64 (2016) 70--75.

\bibitem{zhao:2014adaptive}
Z.~Zhao, W.~He, S.~S. Ge, Adaptive neural network control of a fully actuated
  marine surface vessel with multiple output constraints, IEEE Transactions on
  Control Systems Technology 22~(4) (2014) 1536--1543.

\bibitem{tee:2012control}
K.~P. Tee, S.~S. Ge, Control of state-constrained nonlinear systems using
  integral barrier lyapunov functionals, in: Decision and Control (CDC), 2012
  IEEE 51st Annual Conference on, IEEE, 2012, pp. 3239--3244.

\bibitem{ge:2007approximation}
S.~S. Ge, K.~P. Tee, Approximation-based control of nonlinear mimo time-delay
  systems, Automatica 43~(1) (2007) 31--43.

\bibitem{gilbert:2002nonlinear}
E.~Gilbert, I.~Kolmanovsky, Nonlinear tracking control in the presence of state
  and control constraints: a generalized reference governor, Automatica 38~(12)
  (2002) 2063--2073.

\bibitem{sonneveldt:2007nonlinear}
L.~Sonneveldt, Q.~Chu, J.~Mulder, Nonlinear flight control design using
  constrained adaptive backstepping, Journal of Guidance, Control, and Dynamics
  30~(2) (2007) 322--336.

\bibitem{zhang:2017adaptiveNeural}
T.~Zhang, M.~Xia, Y.~Yi, Adaptive neural dynamic surface control of
  strict-feedback nonlinear systems with full state constraints and unmodeled
  dynamics, Automatica 81 (2017) 232 -- 239.

\bibitem{ding:2008asymptotic}
Z.~Ding, Asymptotic rejection of unmatched general periodic disturbances in a
  class of nonlinear systems, IET Control Theory \& Applications 2~(4) (2008)
  269--276.

\bibitem{koshkouei:2000adaptive}
A.~J. Koshkouei, A.~S. Zinober, Adaptive backstepping control of nonlinear
  systems with unmatched uncertainty, in: Decision and Control, 2000.
  Proceedings of the 39th IEEE Conference on, Vol.~5, IEEE, 2000, pp.
  4765--4770.

\bibitem{polycarpou:1996robust}
M.~M. Polycarpou, P.~A. Ioannou, A robust adaptive nonlinear control design,
  Automatica 32~(3) (1996) 423--427.

\bibitem{tong:2010adaptive}
S.~Tong, X.~He, Y.~Li, H.~Zhang, Adaptive fuzzy backstepping robust control for
  uncertain nonlinear systems based on small-gain approach, Fuzzy Sets and
  Systems 161~(6) (2010) 771--796.

\bibitem{Lavretsky:2011projection}
E.~Lavretsky, T.~E. Gibson, A.~M. Annaswamy, Projection operator in adaptive
  systems (2011).

\bibitem{ikhouane:1998adaptive}
F.~Ikhouane, M.~Krsti{\'c}, Adaptive backstepping with parameter projection:
  Robustness and asymptotic performance, Automatica 34~(4) (1998) 429--435.

\bibitem{Grip:2015globally}
H.~F. Grip, T.~I. Fossen, T.~A. Johansen, A.~Saberi, Globally exponentially
  stable attitude and gyro bias estimation with application to gnss/ins
  integration, Automatica 51 (2015) 158--166.

\bibitem{chen:2010sliding}
M.~Chen, W.-H. Chen, Sliding mode control for a class of uncertain nonlinear
  system based on disturbance observer, International Journal of Adaptive
  Control and Signal Processing 24~(1) (2010) 51--64.

\bibitem{won:2015high}
D.~Won, W.~Kim, D.~Shin, C.~C. Chung, High-gain disturbance observer-based
  backstepping control with output tracking error constraint for
  electro-hydraulic systems, IEEE transactions on control systems technology
  23~(2) (2015) 787--795.

\bibitem{rashad:2016novel}
R.~Rashad, A.~Aboudonia, A.~El-Badawy, A novel disturbance observer-based
  backstepping controller with command filtered compensation for a mimo system,
  Journal of the Franklin Institute 353~(16) (2016) 4039--4061.

\bibitem{li:2006internal}
P.~Li, Internal model principle and repetitve control, University Lecture, Apr
  (2006).

\bibitem{nikiforov:1998adaptive}
V.~Nikiforov, Adaptive non-linear tracking with complete compensation of
  unknown disturbances, European Journal of Control 4~(2) (1998) 132--139.

\bibitem{nikiforov:2001Nonlinear}
V.~Nikiforov, Nonlinear servocompensation of unknown external disturbances,
  Automatica 37~(10) (2001) 1647 -- 1653.

\bibitem{sun:2014composite}
H.~Sun, L.~Guo, Composite adaptive disturbance observer based control and
  back-stepping method for nonlinear system with multiple mismatched
  disturbances, Journal of the Franklin Institute 351~(2) (2014) 1027--1041.

\bibitem{zhou:2006adaptiveSaturation}
J.~Zhou, M.~J. Er, Y.~Zhou, Adaptive neural network control of uncertain
  nonlinear systems in the presence of input saturation, in: Control,
  Automation, Robotics and Vision, 2006. ICARCV'06. 9th International
  Conference on, IEEE, 2006, pp. 1--5.

\bibitem{zhou:2017adaptive}
Q.~Zhou, H.~Li, C.~Wu, L.~Wang, C.~K. Ahn, Adaptive fuzzy control of nonlinear
  systems with unmodeled dynamics and input saturation using small-gain
  approach, IEEE Transactions on Systems, Man, Cybernetics: Systems 47~(8)
  (2017) 1979--1989.

\bibitem{wen:2011robust}
C.~Wen, J.~Zhou, Z.~Liu, H.~Su, Robust adaptive control of uncertain nonlinear
  systems in the presence of input saturation and external disturbance, IEEE
  Transactions on Automatic Control 56~(7) (2011) 1672--1678.

\bibitem{chen:2011adaptive}
M.~Chen, S.~S. Ge, B.~Ren, Adaptive tracking control of uncertain mimo
  nonlinear systems with input constraints, Automatica 47~(3) (2011) 452--465.

\bibitem{zhou:2006adaptiveDeadzone}
J.~Zhou, C.~Wen, Y.~Zhang, Adaptive output control of nonlinear systems with
  uncertain dead-zone nonlinearity, IEEE Transactions on Automatic Control
  51~(3) (2006) 504--511.

\bibitem{zuo:2016control}
Z.~Zuo, X.~Ju, Z.~Ding, Control of gear transmission servo systems with
  asymmetric deadzone nonlinearity, IEEE Transactions on Control Systems
  Technology 24~(4) (2016) 1472--1479.

\bibitem{recker:1991adaptive}
D.~Recker, P.~Kokotovi{\'c}, D.~Rhode, J.~Winkelman, Adaptive nonlinear control
  of systems containing a deadzone, Urbana 51 (1991) 61801.

\bibitem{wang:2004robust}
X.-S. Wang, C.-Y. Su, H.~Hong, Robust adaptive control of a class of nonlinear
  systems with unknown dead-zone, Automatica 40~(3) (2004) 407--413.

\bibitem{ibrir:2007adaptive}
S.~Ibrir, W.~F. Xie, C.-Y. Su, Adaptive tracking of nonlinear systems with
  non-symmetric dead-zone input, Automatica 43~(3) (2007) 522--530.

\bibitem{meng:2014adaptive}
W.~Meng, Q.~Yang, S.~Jagannathan, Y.~Sun, Adaptive neural control of high-order
  uncertain nonaffine systems: A transformation to affine systems approach,
  Automatica 50~(5) (2014) 1473--1480.

\bibitem{zhang:2008adaptive}
T.-P. Zhang, S.~S. Ge, Adaptive dynamic surface control of nonlinear systems
  with unknown dead zone in pure feedback form, Automatica 44~(7) (2008)
  1895--1903.

\bibitem{selmic:2000deadzone}
R.~R. Selmic, F.~L. Lewis, Deadzone compensation in motion control systems
  using neural networks, IEEE Transactions on Automatic Control 45~(4) (2000)
  602--613.

\bibitem{ge:2005robust}
S.~S. Ge, F.~Hong, T.~H. Lee, Robust adaptive control of nonlinear systems with
  unknown time delays, Automatica 41~(7) (2005) 1181--1190.

\bibitem{wang2020adaptive}
W.~Wang, C.~Wen, J.~Huang, J.~Zhou, Adaptive consensus of uncertain nonlinear
  systems with event triggered communication and intermittent actuator faults,
  Automatica 111 (2020) 108667.

\bibitem{ge:2004adaptive}
S.~S. Ge, F.~Hong, T.~H. Lee, Adaptive neural control of nonlinear time-delay
  systems with unknown virtual control coefficients, IEEE Transactions on
  Systems, Man, and Cybernetics, Part B (Cybernetics) 34~(1) (2004) 499--516.

\bibitem{yao:1998nonlinear}
B.~Yao, F.~Bu, G.-C. Chiu, Nonlinear adaptive robust control of
  electro-hydraulic servo systems with discontinuous projections, in: Decision
  and Control, 1998. Proceedings of the 37th IEEE Conference on, Vol.~2, IEEE,
  1998, pp. 2265--2270.

\bibitem{tang:2003adaptive}
X.~Tang, G.~Tao, S.~M. Joshi, Adaptive actuator failure compensation for
  parametric strict feedback systems and an aircraft application, Automatica
  39~(11) (2003) 1975--1982.

\bibitem{li:2016adaptiveQuantization}
Y.-X. Li, G.-H. Yang, Adaptive asymptotic tracking control of uncertain
  nonlinear systems with input quantization and actuator faults, Automatica 72
  (2016) 177--185.

\bibitem{tao:2001adaptive}
G.~Tao, S.~M. Joshi, X.~Ma, Adaptive state feedback and tracking control of
  systems with actuator failures, IEEE Transactions on Automatic Control 46~(1)
  (2001) 78--95.

\bibitem{shen:2014adaptive}
Q.~Shen, B.~Jiang, V.~Cocquempot, Adaptive fuzzy observer-based active
  fault-tolerant dynamic surface control for a class of nonlinear systems with
  actuator faults, IEEE Transactions on Fuzzy Systems 22~(2) (2014) 338--349.

\bibitem{chen:2016adaptive}
M.~Chen, G.~Tao, Adaptive fault-tolerant control of uncertain nonlinear
  large-scale systems with unknown dead zone, IEEE Transactions on Cybernetics
  46~(8) (2016) 1851--1862.

\bibitem{boskovic2003robust}
J.~Boskovic, R.~Mehra, Robust fault-tolerant control design for aircraft under
  state-dependent disturbances, in: AIAA Guidance, Navigation, and Control
  Conference and Exhibit, 2003, p. 5490.

\bibitem{boskovic2005adaptive}
J.~Boskovic, S.~E. Bergstrom, R.~K. Mehra, Adaptive accommodation of failures
  in second-order flight control actuators with measurable rates, in:
  Proceedings of the 2005, American Control Conference, 2005., IEEE, 2005, pp.
  1033--1038.

\bibitem{boskovic:2010decentralized}
J.~D. Boskovic, R.~K. Mehra, A decentralized fault-tolerant control system for
  accommodation of failures in higher-order flight control actuators, IEEE
  Transactions on Control Systems Technology 18~(5) (2010) 1103--1115.

\bibitem{jiang:2014adaptive}
B.~Jiang, D.~Xu, P.~Shi, C.~C. Lim, Adaptive neural observer-based backstepping
  fault tolerant control for near space vehicle under control effector damage,
  IET Control Theory \& Applications 8~(9) (2014) 658--666.

\bibitem{zhang:2004adaptive}
X.~Zhang, T.~Parisini, M.~M. Polycarpou, Adaptive fault-tolerant control of
  nonlinear uncertain systems: an information-based diagnostic approach, IEEE
  Transactions on automatic Control 49~(8) (2004) 1259--1274.

\bibitem{ouyang:2017adaptive}
H.~Ouyang, Y.~Lin, Adaptive fault-tolerant control for actuator failures: A
  switching strategy, Automatica 81 (2017) 87--95.

\bibitem{wang:2010adaptive}
W.~Wang, C.~Wen, Adaptive actuator failure compensation control of uncertain
  nonlinear systems with guaranteed transient performance, Automatica 46~(12)
  (2010) 2082--2091.

\bibitem{jankovic:2001control}
M.~Jankovic, Control lyapunov-razumikhin functions and robust stabilization of
  time delay systems, IEEE Transactions on Automatic Control 46~(7) (2001)
  1048--1060.

\bibitem{malisoff:2009constructions}
M.~Malisoff, F.~Mazenc, Constructions of strict Lyapunov functions, Springer
  Science \& Business Media, 2009.

\bibitem{hua:2008robust}
C.~Hua, G.~Feng, X.~Guan, Robust controller design of a class of nonlinear time
  delay systems via backstepping method, Automatica 44~(2) (2008) 567--573.

\bibitem{nguang:2000robust}
S.~K. Nguang, Robust stabilization of a class of time-delay nonlinear systems,
  IEEE Transactions on Automatic Control 45~(4) (2000) 756--762.

\bibitem{chen:2009novel}
B.~Chen, X.~Liu, K.~Liu, C.~Lin, Novel adaptive neural control design for
  nonlinear mimo time-delay systems, Automatica 45~(6) (2009) 1554--1560.

\bibitem{zhang:2015exact}
Z.~Zhang, S.~Xu, B.~Zhang, Exact tracking control of nonlinear systems with
  time delays and dead-zone input, Automatica 52 (2015) 272--276.

\bibitem{niu:2017adaptive}
B.~Niu, H.~Li, T.~Qin, H.~R. Karimi, Adaptive nn dynamic surface controller
  design for nonlinear pure-feedback switched systems with time-delays and
  quantized input, IEEE Transactions on Systems, Man, and Cybernetics:
  Systems~(99) (2017) 1--13.

\bibitem{zhou:2012stabilization}
B.~Zhou, Z.-Y. Li, W.~X. Zheng, G.-R. Duan, Stabilization of some linear
  systems with both state and input delays, Systems \& Control Letters 61~(10)
  (2012) 989--998.

\bibitem{bekiaris:2010stabilization}
N.~Bekiaris-Liberis, M.~Krstic, Stabilization of linear strict-feedback systems
  with delayed integrators, Automatica 46~(11) (2010) 1902--1910.

\bibitem{khanesar:2015adaptive}
M.~A. Khanesar, O.~Kaynak, S.~Yin, H.~Gao, Adaptive indirect fuzzy sliding mode
  controller for networked control systems subject to time-varying
  network-induced time delay, IEEE Transactions on Fuzzy Systems 23~(1) (2015)
  205--214.

\bibitem{li:2017adaptive}
H.~Li, L.~Wang, H.~Du, A.~Boulkroune, Adaptive fuzzy backstepping tracking
  control for strict-feedback systems with input delay, IEEE Transactions on
  Fuzzy Systems 25~(3) (2017) 642--652.

\bibitem{mazenc:2006backstepping}
F.~Mazenc, P.-A. Bliman, Backstepping design for time-delay nonlinear systems,
  IEEE Transactions on Automatic Control 51~(1) (2006) 149--154.

\bibitem{krstic:2008backstepping}
M.~Krstic, A.~Smyshlyaev, Backstepping boundary control for first-order
  hyperbolic pdes and application to systems with actuator and sensor delays,
  Systems \& Control Letters 57~(9) (2008) 750--758.

\bibitem{bresch:2009adaptive}
D.~Bresch-Pietri, M.~Krstic, Adaptive trajectory tracking despite unknown input
  delay and plant parameters, Automatica 45~(9) (2009) 2074--2081.

\bibitem{bresch:2012adaptive}
D.~Bresch-Pietri, J.~Chauvin, N.~Petit, Adaptive control scheme for uncertain
  time-delay systems, Automatica 48~(8) (2012) 1536--1552.

\bibitem{mazenc:2016new}
F.~Mazenc, M.~Malisoff, New control design for bounded backstepping under input
  delays, Automatica 66 (2016) 48--55.

\bibitem{zhou:2009adaptive}
J.~Zhou, C.~Wen, W.~Wang, Adaptive backstepping control of uncertain systems
  with unknown input time-delay, Automatica 45~(6) (2009) 1415--1422.

\bibitem{karafyllis:2012nonlinear}
I.~Karafyllis, M.~Krstic, Nonlinear stabilization under sampled and delayed
  measurements, and with inputs subject to delay and zero-order hold, IEEE
  Transactions on Automatic Control 57~(5) (2012) 1141--1154.

\bibitem{karafyllis:2013stabilization}
I.~Karafyllis, M.~Krstic, Stabilization of nonlinear delay systems using
  approximate predictors and high-gain observers, Automatica 49~(12) (2013)
  3623--3631.

\bibitem{wang:2015adaptive}
T.~Wang, Y.~Zhang, J.~Qiu, H.~Gao, Adaptive fuzzy backstepping control for a
  class of nonlinear systems with sampled and delayed measurements, IEEE
  Transactions on Fuzzy Systems 23~(2) (2015) 302--312.

\bibitem{leu:2005observer}
Y.-G. Leu, W.-Y. Wang, T.-T. Lee, Observer-based direct adaptive fuzzy-neural
  control for nonaffine nonlinear systems, IEEE Transactions on Neural networks
  16~(4) (2005) 853--861.

\bibitem{ge:2002adaptive}
S.~S. Ge, C.~Wang, Adaptive nn control of uncertain nonlinear pure-feedback
  systems, Automatica 38~(4) (2002) 671--682.

\bibitem{wang:2006iss}
C.~Wang, D.~J. Hill, S.~S. Ge, G.~Chen, An iss-modular approach for adaptive
  neural control of pure-feedback systems, Automatica 42~(5) (2006) 723--731.

\bibitem{hou:2017adaptive}
M.~Hou, Z.~Deng, G.~Duan, Adaptive control of uncertain pure-feedback nonlinear
  systems, International Journal of Systems Science 48~(10) (2017) 2137--2145.

\bibitem{bechlioulis:2014low}
C.~P. Bechlioulis, G.~A. Rovithakis, A low-complexity global approximation-free
  control scheme with prescribed performance for unknown pure feedback systems,
  Automatica 50~(4) (2014) 1217--1226.

\bibitem{hovakimyan:2007dynamic}
N.~Hovakimyan, E.~Lavretsky, A.~Sasane, Dynamic inversion for
  nonaffine-in-control systems via time-scale separation. part i, Journal of
  Dynamical and Control Systems 13~(4) (2007) 451--465.

\bibitem{yoo:2012adaptive}
S.~Yoo, Adaptive control of non-linearly parameterised pure-feedback systems,
  IET control theory \& applications 6~(3) (2012) 467--473.

\bibitem{liu:2016adaptive}
Z.~Liu, X.~Dong, J.~Xue, H.~Li, Y.~Chen, Adaptive neural control for a class of
  pure-feedback nonlinear systems via dynamic surface technique, IEEE
  transactions on neural networks and learning systems 27~(9) (2016)
  1969--1975.

\bibitem{xing2016event}
L.~Xing, C.~Wen, Z.~Liu, H.~Su, J.~Cai, Event-triggered adaptive control for a
  class of uncertain nonlinear systems, IEEE transactions on automatic control
  62~(4) (2016) 2071--2076.

\bibitem{tabuada2007event}
P.~Tabuada, Event-triggered real-time scheduling of stabilizing control tasks,
  IEEE Transactions on Automatic Control 52~(9) (2007) 1680--1685.

\bibitem{zhang2018event}
C.-H. Zhang, G.-H. Yang, Event-triggered adaptive output feedback control for a
  class of uncertain nonlinear systems with actuator failures, IEEE
  transactions on cybernetics 50~(1) (2018) 201--210.

\bibitem{li2017model}
Y.-X. Li, G.-H. Yang, Model-based adaptive event-triggered control of
  strict-feedback nonlinear systems, IEEE transactions on neural networks and
  learning systems 29~(4) (2017) 1033--1045.

\bibitem{cao2018adaptive}
L.~Cao, H.~Li, Q.~Zhou, Adaptive intelligent control for nonlinear
  strict-feedback systems with virtual control coefficients and uncertain
  disturbances based on event-triggered mechanism, IEEE transactions on
  cybernetics 48~(12) (2018) 3390--3402.

\bibitem{wang2018exponential}
Y.~Wang, Y.~Xia, C.~K. Ahn, Y.~Zhu, Exponential stabilization of takagi--sugeno
  fuzzy systems with aperiodic sampling: An aperiodic adaptive event-triggered
  method, IEEE Transactions on Systems, Man, and Cybernetics: Systems 49~(2)
  (2018) 444--454.

\bibitem{borgers2018periodic}
D.~P. Borgers, R.~Postoyan, A.~Anta, P.~Tabuada, D.~Ne{\v{s}}i{\'c},
  W.~Heemels, Periodic event-triggered control of nonlinear systems using
  overapproximation techniques, Automatica 94 (2018) 81--87.

\bibitem{wu:2009backstepping}
Z.-J. Wu, X.-J. Xie, P.~Shi, Y.-q. Xia, Backstepping controller design for a
  class of stochastic nonlinear systems with markovian switching, Automatica
  45~(4) (2009) 997--1004.

\bibitem{khasminskii:2011stochastic}
R.~Khasminskii, Stochastic stability of differential equations, Springer
  Science \& Business Media, 2011.

\bibitem{psillakis:2007nn}
H.~E. Psillakis, A.~T. Alexandridis, Nn-based adaptive tracking control of
  uncertain nonlinear systems disturbed by unknown covariance noise, IEEE
  Transactions on Neural Networks 18~(6) (2007) 1830--1835.

\bibitem{wang:2016adaptiveHysteresis}
F.~Wang, Z.~Liu, Y.~Zhang, C.~P. Chen, Adaptive fuzzy control for a class of
  stochastic pure-feedback nonlinear systems with unknown hysteresis, IEEE
  Transactions on Fuzzy Systems 24~(1) (2016) 140--152.

\bibitem{yin:2011finite}
J.~Yin, S.~Khoo, Z.~Man, X.~Yu, Finite-time stability and instability of
  stochastic nonlinear systems, Automatica 47~(12) (2011) 2671--2677.

\end{thebibliography}


\end{document}